\journal{Journal of Multivariate Analysis}
		\newcommand{\e}{\mathrm{e}}
		\newcommand{\diffp}[1]{ \frac{\partial}{\partial\,#1} }
        \newtheorem{theorem}{Theorem}
        \newtheorem{lemma}{Lemma}
        \newcommand{\tr}{\mathrm{tr}}
        \newcommand{\Cov}{\mathrm{cov}}
        \newcommand{\Var}{\mathrm{var}}
        \newcommand{\Normal}[2]{\mathcal{N}\left(#1,#2 \right)}
        \newcommand{\Normall}[2]{\mathcal{N}(#1,#2)}
        \newcommand{\undertext}[2]{\ensuremath{\underset{#2}{\underbrace{#1}}}}
        \newcommand{\changed}[1]{{#1}}
\begin{document}

\begin{frontmatter}

\title{Probabilistic partial least squares model:\\ Identifiability, estimation and application}


\author[LUMC]{Said el Bouhaddani\corref{cor}}
\cortext[cor]{Corresponding author}
\ead{S.el\_ Bouhaddani@lumc.nl}

\author[Utrecht,LUMC]{Hae-Won Uh}
\ead{H.W.Uh@umcutrecht.nl}

\author[Edin]{Caroline Hayward}
\ead{Caroline.Hayward@igmm.ed.ac.uk}

\author[TU]{Geurt Jongbloed}
\ead{G.Jongbloed@tudelft.nl}

\author[Leeds,LUMC]{Jeanine Houwing-Duistermaat}
\ead{J.Duistermaat@leeds.ac.uk}

\address[LUMC]{Department of Medical statistics and bioinformatics, Leiden University Medical Center, The Netherlands}
\address[Utrecht]{Department of Biostatistics and Research Support UMC Utrecht, div. Julius Centrum, University Medical Center Utrecht, The Netherlands}
\address[TU]{Department of Applied Mathematics, Delft University of Technology, The Netherlands}
\address[Leeds]{Department of Statistics, University of Leeds, United Kingdom}
\address[Edin]{MRC Human Genetics Unit, Institute of Genetics and Molecular Medicine, University of Edinburgh, Scotland}

\begin{abstract}
With a rapid increase in volume and complexity of data sets, there is a need for methods that can extract useful information, for example the relationship between two data sets measured for the same persons. 
The Partial Least Squares (PLS) method can be used for this dimension reduction task.
\changed{Within life sciences, results across studies are compared and combined. Therefore, parameters need to be identifiable, which is not the case for PLS. In addition, PLS is an algorithm, while epidemiological study designs are often outcome-dependent and methods to analyze such data require a probabilistic formulation. Moreover, a probabilistic model provides a statistical framework for inference.
To address these issues, we develop Probabilistic PLS (PPLS).}
We derive maximum likelihood estimators that satisfy the identifiability conditions by using an EM algorithm with a constrained optimization in the M step. We show that the PPLS parameters are identifiable up to sign. 
A simulation study is conducted to study the performance of PPLS compared to existing methods. The PPLS estimates performed well in various scenarios, even in high dimensions. Most notably, the estimates seem to be robust against departures from normality. 
To illustrate our method, we applied it to IgG glycan data from two cohorts. Our PPLS model provided insight as well as interpretable results across the two cohorts.

\end{abstract}

\begin{keyword}
Dimension reduction \sep EM algorithm \sep Identifiability \sep Inference \sep Probabilistic partial least squares
\end{keyword}
 
\verb|© 2018. This manuscript version is made available under the CC-BY-NC-ND 4.0 license|\\
\verb|http://creativecommons.org/licenses/by-nc-nd/4.0/|

\end{frontmatter}


\section{Introduction}
With the exponentially growing volume of data sets, multivariate methods for reducing dimensionality are an important research area in statistics. For combining two data sets, Partial Least Squares (PLS) regression \cite{Wold1973} is a popular dimension reduction method \cite{Abdi2010}. PLS decomposes variation in each data set in a joint part and a residual part. The joint part is a linear projection of one data set on the other that best explains the covariance between the two data sets. These projections are obtained by iterative algorithms, such as NIPALS \cite{Wold1973}. 
\changed{
Partial Least Squares is popular in chemometrics \cite{Boulesteix2007}. In this field, the focus is on development of algorithms with good prediction performance, while the underlying model is less important. For applications in life sciences, interpretation of parameter estimates is necessary to gain understanding of the underlying molecular mechanisms.
}

\changed{
For interpretation, a model needs to be identifiable. A model is said to be unidentifiable if the model corresponds to more than one set of parameter values. For PLS, rotation of the parameters does not change the model \cite{Wang2005}. Hence, PLS does not provide an identifiable model. By constraining the parameter space, identifiability can be obtained. This involves solving a challenging optimization problem, since PLS requires estimating a structured covariance matrix \cite{Ros2016}.
}

\changed{
For many problems in life sciences the study design needs to be accounted for, and algorithmic approaches such as PLS cannot be applied. Hence, a probabilistic formulation is necessary. Since likelihood method provides asymptotic standard errors of parameter estimates, computer-intensive resampling procedures can be avoided. 
}

Also for other dimension reduction techniques, probabilistic methods have been developed. In 1999, Tipping and Bishop \cite{Tipping1999} developed the Probabilistic Principal Component Analysis (PPCA), in order to deal with missing data and dependent samples. 
In 2005, Bach and Jordan \cite{Bach2005} developed Probabilistic Canonical correlation analysis (PCCA). However, for both PPCA and PCCA the model parameters are not identifiable, since rotation of the parameters does not change the model \cite{Tipping1999, Bach2005}. \changed{In addition, in 2015, simultaneous envelopes models have been developed \cite{Cook2015} for `low-dimensional' settings. Further, Probabilistic PLS Regression and Probabilistic PLS have been proposed \cite{Li2015,Zheng2016}. For all these approaches, the model parameters are not identifiable.
}

\changed{
In this paper we propose the Probabilistic Partial Least Squares (PPLS) model and show that the model parameters are identifiable up to a sign. We propose to maximize the PPLS likelihood with an EM algorithm that decouples the likelihood into several factors involving distinct sets of parameters. In the M step, a constrained optimization problem is solved by using a matrix of Lagrange multipliers. 
}

The rest of the paper is organized as follows: In Section \ref{sec:Model} we develop the PPLS model and establish identifiability of the model parameters. We develop an efficient algorithm for estimating the PPLS parameters. In Section \ref{sec:simu} we study the performance of the PPLS estimators via simulations. In Section \ref{sec:data} we illustrate the PPLS model with two data matrices from two cohorts. We finish with a discussion. 

\section{Model and estimation}
\label{sec:Model}
\subsection{The PPLS model}
Let $x$ and $y$ be two random row-vectors of dimension $p$ and $q$, respectively. The Probabilistic Partial Least Squares (PPLS) model describes the two random vectors in terms of a joint part and a noise part. The joint part consists of correlated latent vectors, denoted by $t$ and $u$, while the noise part consists of isotropic normal random vectors referred to as $e$, $f$ and $h$. The dimension of $t$ and $u$ is denoted by $r$. The PPLS model describing the relationship between $x$, $y$ and the joint and noise parts is
\begin{equation}
x =  tW^\top + e, \quad
y =  uC^\top + f, \quad
u = tB + h.
\label{eq:PPLS}
\end{equation}

Specifically, $e=(e_1,\ldots,e_p)$, $f=(f_1,\ldots,f_q)$ and $h=(h_1,\ldots,h_r)$ are independent with zero mean and referred to as {noise variables}. The distributions of $e$, $f$ and $h$ are multivariate normal with positive definite covariance matrix proportional to the identity matrix,
\begin{equation*}
e \sim  \Normall{0}{\sigma_e^2 I_p}, \quad
f \sim  \Normall{0}{\sigma_f^2 I_q}, \quad
h \sim  \Normall{0}{\sigma_h^2 I_r}.
\end{equation*}

The latent vector $t=(t_1,\ldots,t_r)$ is an $r$-dimensional multivariate normal vector with with zero mean and diagonal positive definite covariance matrix $\Sigma_t = \mathrm{diag}(\sigma^2_{t_1},\ldots,\sigma^2_{t_r})$, so 
\begin{equation}
t \sim \Normal{0}{\Sigma_t}.
\label{eq:Tindep}
\end{equation}
The matrix $B=\mathrm{diag}\left(b_1,\ldots,b_r\right)$ is a diagonal matrix of size $r$, containing regression coefficients of $u$ on $t$.
Finally $W$ ($p\times r$) and $C$ ($q\times r$) are parameter matrices, referred to as {loadings}. 
The PPLS model for the random $p$-dimensional row-vector $x$ and random $q$-dimensional row-vector $y$ is given in Eq.~\eqref{eq:PPLS}. \changed{Let $\theta$ be the parameters of the PPLS model, i.e.,}
\begin{equation}
\label{eq:theta}
\theta = (W, C, B, \Sigma_t, \sigma_e, \sigma_f, \sigma_h).
\end{equation}
The PPLS model and its parameters are formulated conditional on the value of the dimension of the latent space $r$.

The PPLS model \eqref{eq:PPLS} assumes a multivariate normal distribution for the observable random vectors $x$ and $y$. The covariance between $x$ and $y$ is modeled by the regression of the latent vector $u$ on $t$. The distribution of $(x,y)$ is $\Normal{0}{\Sigma}$ with density given, for ${x}\in\mathbb{R}^p$ and ${y}\in\mathbb{R}^q$, by
\begin{equation*}
f({x},{y}) = (2\pi)^{-(p+q)/2} |\Sigma|^{-\frac{1}{2}} \e^{({x},{y})\Sigma^{-1}({x},{y})^\top},
\end{equation*}
and covariance matrix
\begin{equation}
\Sigma = \begin{pmatrix}
\Sigma_x & \Sigma_{x,y} \\
\Sigma_{y,x} & \Sigma_y
\end{pmatrix} = \begin{pmatrix}
W \Sigma_t W^\top + \sigma_e^2 I_p & W \Sigma_t B C^\top \\
C B \Sigma_t W^\top & C (B^2 \Sigma_t + \sigma_h^2 I_r) C^\top + \sigma_f^2 I_q
\end{pmatrix}.
\label{eq:covmatrix}
\end{equation}
This follows from the normality property and from computing the variances and covariances of the random vectors; see Appendix \ref{ap:theoCov} for the details.

\subsection{Identifiability of probabilistic PLS}
To establish identifiability of the PPLS model, some assumptions about its parameters have to be made. First, we assume that $ 0 < r < \min(p,q)$. Second, we assume that the diagonal elements of $B$ are positive, $b_k>0$ for $k\in\{1,\ldots,r\}$. This will not restrict the model, since $t_k b_k$ is equal to $-t_k b_k$ in distribution. To identify the order of the loading vectors, the elements of $(\sigma_{t_k}^2 b_k)_{k=1}^r$ are assumed to be strictly decreasing with $k$. Finally, we assume that the loading matrices $W$ and $C$ are orthogonal, i.e.,  $W^\top W = C^\top C = I_r$. Together with the diagonality of $\Sigma_t$ in \eqref{eq:Tindep}, it implies identifiability of all parameters up to sign. This is shown in the following theorem.
\begin{theorem}\label{th:identif}
Let $r$ be fixed such that $0 < r < \min(p,q)$.
Let $(x,y)_1$ and $(x,y)_2$ be generated by the PPLS model \eqref{eq:PPLS} having covariance matrix $\Sigma_1$ and $\Sigma_2$ with underlying parameters $\theta_1$ and $\theta_2$ as defined in \eqref{eq:theta}, respectively. 
Then $
\Sigma_1 = \Sigma_2 $
implies that $W_1 = W_2 \Delta$, $C_1 = C_2 \Delta$ for some diagonal matrix $\Delta$ with on the diagonal elements $\delta_i \in \{-1,1\}$, for $i\in\{1,\ldots,r\}$, and all other parameters in $\theta_1$ and $\theta_2$ are equal.
\end{theorem}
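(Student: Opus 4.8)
The plan is to read the parameters off the covariance matrix in \eqref{eq:covmatrix} block by block, starting with the off-diagonal block, which carries the essential identifying information. The key observation is that the cross-covariance block $\Sigma_{x,y} = W \Sigma_t B C^\top$ is exactly a (thin) singular value decomposition: the columns of $W$ and $C$ are orthonormal by assumption, and the diagonal matrix $\Sigma_t B = \mathrm{diag}(\sigma_{t_1}^2 b_1, \ldots, \sigma_{t_r}^2 b_r)$ has strictly positive, strictly decreasing diagonal entries. I would therefore first invoke uniqueness of the SVD with distinct singular values. From $\Sigma_{x,y,1} = \Sigma_{x,y,2}$ this yields immediately that the singular values agree, i.e. $\Sigma_{t,1} B_1 = \Sigma_{t,2} B_2$, and that the singular vectors agree up to a common sign per component: there is a diagonal sign matrix $\Delta$ with entries in $\{-1,1\}$ such that $W_1 = W_2 \Delta$ and $C_1 = C_2 \Delta$, with the \emph{same} $\Delta$ on both sides. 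This single step already delivers the sign structure claimed in the theorem and is the crux of the argument.

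Next I would use the diagonal block $\Sigma_x$ to recover $\sigma_e^2$ and $\Sigma_t$. Substituting $W_1 = W_2 \Delta$ into $\Sigma_{x,1} = \Sigma_{x,2}$ and using that conjugation of the diagonal matrix $\Sigma_{t,1}$ by the sign matrix $\Delta$ leaves it unchanged ($\Delta \Sigma_{t,1} \Delta = \Sigma_{t,1}$), the equation reduces to $W_2 (\Sigma_{t,1} - \Sigma_{t,2}) W_2^\top = (\sigma_{e,2}^2 - \sigma_{e,1}^2) I_p$. The left-hand side has rank at most $r$, whereas a nonzero multiple of $I_p$ has rank $p > r$; this forces $\sigma_{e,1}^2 = \sigma_{e,2}^2$, and then left/right multiplication by $W_2^\top$ and $W_2$ (legitimate since $W_2^\top W_2 = I_r$) gives $\Sigma_{t,1} = \Sigma_{t,2}$. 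Combining $\Sigma_{t,1} = \Sigma_{t,2}$ with $\Sigma_{t,1} B_1 = \Sigma_{t,2} B_2$ and the positive definiteness of $\Sigma_t$ then yields $B_1 = B_2$.

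An entirely analogous rank argument applied to the block $\Sigma_y$ recovers the remaining two variances. Substituting $C_1 = C_2 \Delta$, $B_1 = B_2$, $\Sigma_{t,1} = \Sigma_{t,2}$ into $\Sigma_{y,1} = \Sigma_{y,2}$ collapses it to $(\sigma_{h,1}^2 - \sigma_{h,2}^2)\, C_2 C_2^\top = (\sigma_{f,2}^2 - \sigma_{f,1}^2)\, I_q$; since $C_2 C_2^\top$ is a rank-$r$ projection with $r < q$, the same rank comparison forces $\sigma_{f,1}^2 = \sigma_{f,2}^2$ and hence $\sigma_{h,1}^2 = \sigma_{h,2}^2$. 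At that point every parameter has been matched, with $W$ and $C$ tied together through the single sign matrix $\Delta$, which is exactly the claim.

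The main obstacle is the first step: one must recognize that the off-diagonal block is an SVD and correctly invoke its uniqueness so that the sign ambiguities of $W$ and $C$ are coupled by the same $\Delta$ rather than by two independent sign matrices. The assumption that $\sigma_{t_k}^2 b_k$ is strictly decreasing is precisely what makes the singular values distinct and thereby removes both permutation and orthogonal-rotation freedom; without it the singular subspaces would only be identified up to an orthogonal transformation. The remaining steps are routine once the inequalities $r < \min(p,q)$ are used to separate the isotropic noise contributions from the low-rank signal.
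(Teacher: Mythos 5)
Your proof is correct and follows essentially the same route as the paper's: identify $W$, $C$ up to a \emph{common} sign matrix $\Delta$ and the product $\Sigma_t B$ from the cross-covariance block via uniqueness of the SVD with distinct singular values (which the paper proves as a standalone lemma using the spectral theorem applied to $\Sigma_{x,y}\Sigma_{x,y}^\top$ and $\Sigma_{x,y}^\top\Sigma_{x,y}$), then recover $\sigma_e^2$ and $\Sigma_t$ (hence $B$) from $\Sigma_x$, and finally $\sigma_f^2$ and $\sigma_h^2$ from $\Sigma_y$. The only cosmetic difference is that you separate the isotropic noise term by a rank-counting argument, whereas the paper multiplies by a unit vector orthogonal to the column span of $W$; both rest on the assumption $r < \min(p,q)$ and are equivalent.
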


The formal proof is given in Appendix \ref{ap:identif}. 
Identifiability up to sign can be represented by a diagonal orthogonal matrix, namely a diagonal matrix with diagonal elements in $\{-1,1\}$. 
For example, taking the model for $x$ in \eqref{eq:PPLS}, we may substitute $W$ by $WR_S$ and $t$ by $t R_S$, where $R_S$ is a diagonal orthogonal matrix, and get
\begin{equation*}
x = t R_S R_S^\top W^\top + e = \sum_{j=1}^r t_j (R_S)_{jj}^2 w_j^\top + e.
\end{equation*}
Since $(R_S)_{jj}^2 = 1$ and the distribution of $t_j$ and $-t_j$ is the same, the right-hand side reduces to the original model for $x$ in \eqref{eq:PPLS}.
Note that the PPLS model is not invariant under general rotation matrices. Take a general rotation matrix $R$, then we still get 
\begin{equation*}
x = t W^\top + e = t R R^\top W^\top + e,
\end{equation*}
since $RR^\top = I_r$. Inspecting the covariance of $TR$ we see that $\Cov(TR) = R^\top \Sigma_t R$, which is not diagonal if $R$ is not diagonal, and violates the PPLS model assumption on $\Sigma_t$ in Eq.~\eqref{eq:Tindep}.

\subsection{Estimating the parameters}
\label{subsec:estim}
\changed{
Unlike the iterative PLS methods, we propose a simultaneous approach for estimating the parameters, while taking the constraints in the PPLS model into account. 
Given the number of PPLS components, $r$,} the log likelihood of an independent and identically distributed (iid) sample $(X,Y)=\{(X_1,Y_1)^\top,\ldots,(X_N,Y_N)^\top\}^\top$ of size $N$ from $(x,y)$ is
\begin{equation}
L(\theta) = - \frac{N(p+q)}{2} - \frac{N}{2}\ln|\Sigma| - \frac{N}{2}\tr (S\Sigma^{-1} )
\label{eq:loglike_obs}
\end{equation}
with $S = N^{-1}\sum_{i=1}^N (X_i,Y_i)^\top (X_i,Y_i)$ and $\Sigma$ as in Eq.~\eqref{eq:covmatrix}. To ensure empirical identifiability, we assume that $r < N$. Note that the data dimensionality $p$ and $q$ may be larger than $N$.
For estimation of $\theta$, maximum likelihood is used. 

The log likelihood \eqref{eq:loglike_obs} depends in a non-linear way on the theoretical covariance matrix $\Sigma$, which contains the loadings and variances. Optimizing this function directly is a non-trivial task, especially in high dimensions (i.e. when $p$ and $q$ are large). However, the PPLS model allows for a more simple (but iterative) optimization approach. 
Indeed, the maximum likelihood estimates for $\theta$ are a least squares type solution if the latent variables $t$ and $u$ are observed, as the model for $x$ and $y$ in \eqref{eq:PPLS} involves known $t$ and $u$.
In contrast, knowing $\theta$ allows for reconstruction of $t$ and $u$ by computing their conditional means given $x$ and $y$. Alternating these two scenarios is actually an Expectation-Maximization (EM) \cite{Dempster1977} algorithm, with observed data $(x,y)$ and missing data $(t,u)$. 

\paragraph{The EM algorithm}
The joint distribution of the complete data $(x,y,t,u)$ can (with abuse of notation) be decomposed as
\begin{equation}
f(x,y,t,u) = f(x|t) f(y|u) f(u|t) f(t).
\label{eq:decomposition}
\end{equation}
This follows from
\begin{equation*}
f(x,y,t,u) = f(x,y|t,u)f(t,u) = f(x|t,u)f(y|t,u)f(t,u).
\end{equation*}
The second equation is implied by the fact that $x$ and $y$ are independent given $t$ and $u$. The first two factors in the right-hand side can be rewritten as $f(x|t,u) = f(x|t)$ and $f(y|t,u) = f(y|u)$, since $x$ and $u$ are independent given $t$, and $y$ and $t$ are independent given $u$. The last factor can be rewritten as $f(u|t)f(t)$, yielding Eq.~\eqref{eq:decomposition}.
The logarithm of the first three factors in the product in \eqref{eq:decomposition} can be written as
\begin{equation*}
\begin{split}
\ln f(X|T) & = -\frac{Np}{2\pi\sigma_e^2} - \frac{1}{2\sigma_e^2} \sum_{i=1}^N ||X_i - T_iW^\top||^2, \\
\ln f(Y|U) & = -\frac{Nq}{2\pi\sigma_f^2} - \frac{1}{2\sigma_f^2} \sum_{i=1}^N ||Y_i - U_iC^\top||^2, \\
\ln f(U|T) & = -\frac{Nr}{2\pi\sigma_h^2} - \frac{1}{2\sigma_h^2} \sum_{i=1}^N ||U_i - T_iB||^2.
\end{split}
\end{equation*}
Denote by $L_{\rm Comp} = \ln f(X,Y,T,U)$ the complete data log-likelihood, and define
\begin{equation*}
Q(\theta) = \mathrm{E} \{L_{\rm Comp}(\theta)|X,Y,\theta' \},
\end{equation*}
where the expectation is taken conditional on the observed $X$ and $Y$, and $\theta'$ is a fixed current estimate of the parameters. 
By optimizing $Q$ over all allowed $\theta$, we get a non-negative increase in the \emph{observed} log-likelihood $L$. Moreover, by iterating this process of taking the expectation and maximizing over $\theta$, the estimates in general converge to a stationary point or, in particular, a (possibly local) maximum of $L$ \cite{Dempster1977,Wu1983}.
The expectation step calculates the conditional expectation of the missing data given the observed data given by $Q(\theta)$, which may in general involve intractable integration. However, for the exponential family, in particular the multivariate normal family, the complete likelihood depends on the complete data only via the sufficient statistics (called $t(\mathbf{x})$ in \cite{Dempster1977}), which are given in terms of the first and second moments of the complete data for the multivariate normal distribution. Computing $Q(\theta)$ implies computing the expected first and second moment of the latent variables: $\mathrm{E}\left( T | X,Y,\theta \right)$, $\mathrm{E}\left( T^\top T | X,Y,\theta \right)$, $\mathrm{E}\left( U | X,Y,\theta \right)$,  $\mathrm{E}\left( U^\top U | X,X,\theta \right)$ and $\mathrm{E}\left( U^\top T | X,Y,\theta \right)$; see Appendix \ref{ap:EM} for details. Moreover, the decomposition in \eqref{eq:decomposition} allows for optimization of $\mathrm{E}\{\ln f(X|T)\}$, $\mathrm{E}\{\ln f(Y|U)\}$ and $\mathrm{E}\{\ln f(U|T)\}$ separately, while only considering parameters involved in each factor. Maximizing $Q$ over $\theta$ yields parameter estimates for the next iteration in the EM algorithm.
This leads us to the following theorem.
\begin{theorem}
Let $X$ and $Y$ be an observed data sample of size $N$, generated according to the PPLS model \eqref{eq:PPLS}. Let $r$ be fixed such that $0 < r < \min(N,p,q)$.
The parameters in $\theta$ can be estimated with an EM algorithm, yielding the following iterative scheme in $k$ with given starting values for $k=0$:
\begin{equation*}
\begin{gathered}
W^{k+1} = X^\top \, \mathrm{E} ( T | X,Y,\theta^k  ) (L_W^\top )^{-1}; \quad
C^{k+1} = Y^\top \, \mathrm{E} ( U | X,Y,\theta^k  ) (L_C^\top )^{-1}; \\
B^{k+1} = \mathrm{E} ( U^\top T | X,Y,\theta^k  )  \{\mathrm{E} ( T^\top T | X,Y,\theta^k  ) \}^{-1} \circ I_r; \;
\Sigma_{t}^{k+1} = \frac{1}{N}\mathrm{E} ( T^\top T | X,Y,\theta^k ) \circ I_r; \;
(\sigma^2_{h})^{k+1} = \frac{1}{Nr}\tr\mathrm{E} ( H^\top H | X,Y,\theta^k  ); \\
(\sigma^2_{e})^{k+1} = \frac{1}{Np}\tr\mathrm{E} ( E^\top E | X,Y,\theta^k  ); \quad
(\sigma^2_{f})^{k+1} = \frac{1}{Nq}\tr\mathrm{E} ( F^\top F | X,Y,\theta^k  );
\end{gathered}
\end{equation*}
where $L_W$ and $L_C$ are such that
\begin{equation*}
L_W L_W^\top = \mathrm{E} ( T^\top | X,Y,\theta^k  ) \, X \, X^\top \, \mathrm{E} ( T | X,Y,\theta^k  ), \quad
L_C L_C^\top = \mathrm{E} ( U^\top | X,Y,\theta^k ) \, Y \, Y^\top \, \mathrm{E} ( U | X,Y,\theta^k  ).
\end{equation*}
\label{th:EM}
\end{theorem}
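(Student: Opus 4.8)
The plan is to verify that each update in the scheme is the exact maximizer of the expected complete-data log-likelihood $Q(\theta)=\mathrm{E}\{L_{\rm Comp}(\theta)\mid X,Y,\theta^k\}$, exploiting the factorization in \eqref{eq:decomposition}. First I would carry out the E step. Since the complete-data density belongs to the multivariate normal family, $L_{\rm Comp}$ is linear in the sufficient statistics $T$, $T^\top T$, $U$, $U^\top U$ and $U^\top T$, so taking the conditional expectation amounts to replacing these quantities by the Gaussian conditional moments $\mathrm{E}(T\mid X,Y,\theta^k)$, and so on (computed in Appendix \ref{ap:EM}); I abbreviate these as $\mathrm{E}(T)$, $\mathrm{E}(T^\top T)$, etc. This turns $Q$ into an explicit function of $\theta$, and the decomposition \eqref{eq:decomposition} makes $Q$ additively separable into four blocks governed by the disjoint parameter groups $(W,\sigma_e)$, $(C,\sigma_f)$, $(B,\sigma_h)$ and $\Sigma_t$, so each block may be maximized on its own.

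For the blocks not involving the loadings I would simply set the relevant gradients to zero. Expanding, for instance, $\mathrm{E}\sum_i\|U_i-T_iB\|^2 = \tr\,\mathrm{E}(U^\top U)-2\tr\{B\,\mathrm{E}(T^\top U)\}+\tr\{B^2\,\mathrm{E}(T^\top T)\}$ and differentiating component-wise in the diagonal entries of $B$ (the diagonality being imposed directly rather than through a multiplier) yields the stated update $B^{k+1}=\mathrm{E}(U^\top T)\{\mathrm{E}(T^\top T)\}^{-1}\circ I_r$. An identical scalar computation on the Gaussian prior term $\mathrm{E}\ln f(T)$ gives $\Sigma_t^{k+1}=N^{-1}\mathrm{E}(T^\top T)\circ I_r$, and profiling each variance once its loading or coefficient has been fixed gives the trace formulas, e.g. $(\sigma_e^2)^{k+1}=(Np)^{-1}\tr\,\mathrm{E}(E^\top E)$ with $E=X-TW^\top$, and similarly for $\sigma_f^2$ and $\sigma_h^2$.

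The crux, and the step I expect to be the main obstacle, is the constrained maximization for $W$ (and symmetrically $C$), where the orthogonality assumption $W^\top W=I_r$ must be enforced. Writing $A=X^\top\mathrm{E}(T)$ and expanding $\mathrm{E}\sum_i\|X_i-T_iW^\top\|^2 = \tr(X^\top X)-2\tr(W^\top A)+\tr\{W^\top W\,\mathrm{E}(T^\top T)\}$, I would introduce a symmetric matrix $\Lambda$ of Lagrange multipliers and form the Lagrangian $\mathrm{E}\{\ln f(X\mid T)\}-\tr\{\Lambda(W^\top W-I_r)\}$. Setting the matrix derivative with respect to $W$ to zero gives a stationarity equation of the form $A=W\widetilde\Lambda$, where $\widetilde\Lambda$ is a symmetric matrix absorbing $\mathrm{E}(T^\top T)$, $\Lambda$ and $\sigma_e^2$. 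The multiplier is then pinned down by the constraint itself: substituting $W=A\widetilde\Lambda^{-1}$ into $W^\top W=I_r$ forces $\widetilde\Lambda^2=A^\top A$, so $\widetilde\Lambda=(A^\top A)^{1/2}=L_W^\top$ with $L_WL_W^\top=A^\top A=\mathrm{E}(T^\top)\,X\,X^\top\,\mathrm{E}(T)$. Hence $W^{k+1}=A(L_W^\top)^{-1}=X^\top\mathrm{E}(T)(L_W^\top)^{-1}$, exactly as claimed, and the derivation for $C$ is verbatim with $(Y,U)$ in place of $(X,T)$.

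Finally I would confirm that this stationary point is a genuine maximizer rather than a saddle. On the Stiefel manifold $\{W:W^\top W=I_r\}$ the quadratic term $\tr\{W^\top W\,\mathrm{E}(T^\top T)\}$ is constant, so the problem reduces to maximizing the linear functional $\tr(W^\top A)$, whose maximizer is the polar factor $A(A^\top A)^{-1/2}$; this both confirms that $L_W$ is the symmetric square root and certifies the maximum. Since every block of $Q$ is maximized exactly, $Q(\theta^{k+1})\ge Q(\theta^k)$, and the standard EM inequality \cite{Dempster1977,Wu1983} then yields the corresponding non-decrease of the observed log-likelihood \eqref{eq:loglike_obs}, completing the argument.
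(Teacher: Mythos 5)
Your overall route --- E step via Gaussian conditional moments, the four-block factorization of $Q$, Lagrange multipliers for $W^\top W = I_r$ --- is the same as the paper's, and your handling of the loading blocks is actually tighter than the paper's: keeping the multiplier symmetric forces $\widetilde{\Lambda} = (A^\top A)^{1/2}$ with $A = X^\top\mathrm{E}(T\mid X,Y,\theta^k)$, and the observation that $\tr(WC_{TT}W^\top) = \tr(C_{TT})$ on the constraint set reduces the problem to maximizing $\tr(W^\top A)$, whose maximizer is the polar factor; this certifies a genuine maximum, whereas the paper only exhibits a stationary point via a Cholesky factor (its multiplier $\Lambda = L_t^\top - C_{TT}$ is not even symmetric, and the Cholesky-based $W$ is not the polar factor).

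The genuine gap is in the $B$ update. When $B$ is diagonal, both traces in your expansion involve only diagonal entries: $\tr\{B\,\mathrm{E}(T^\top U)\} = \sum_k b_k[\mathrm{E}(U^\top T)]_{kk}$ and $\tr\{B\,\mathrm{E}(T^\top T)B\} = \sum_k b_k^2[\mathrm{E}(T^\top T)]_{kk}$, so the component-wise differentiation you describe gives
\begin{equation*}
b_k = \frac{[\mathrm{E}(U^\top T\mid X,Y,\theta^k)]_{kk}}{[\mathrm{E}(T^\top T\mid X,Y,\theta^k)]_{kk}},
\quad\text{i.e.,}\quad
B^{k+1} = \{\mathrm{E}(U^\top T)\circ I_r\}\,\{\mathrm{E}(T^\top T)\circ I_r\}^{-1},
\end{equation*}
and \emph{not} the stated $\{\mathrm{E}(U^\top T)\,\mathrm{E}(T^\top T)^{-1}\}\circ I_r$. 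The two coincide only if $\mathrm{E}(T^\top T\mid X,Y,\theta^k)$ is diagonal, which it is not for finite data: it contains the term $\mu_t^\top\mu_t \propto \Sigma_{t,(x,y)}\,\Sigma^{-1}S\,\Sigma^{-1}\Sigma_{(x,y),t}$, which is non-diagonal because the sample covariance $S$ differs from $\Sigma$. The paper reaches the stated formula by a different route --- unconstrained maximization over all $B$ followed by zeroing the off-diagonal entries --- which is a projection heuristic rather than the diagonal-constrained maximizer. So your derivation, carried out correctly, proves a different update than the theorem asserts, and your concluding step (``every block of $Q$ is maximized exactly, hence $Q(\theta^{k+1})\ge Q(\theta^k)$'') then does not apply to the scheme as stated. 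Note that the analogous issue does not arise for $\Sigma_t$: there $\tr\{\mathrm{E}(T^\top T)\Sigma_t^{-1}\}$ depends only on the diagonal of $\mathrm{E}(T^\top T)$ once $\Sigma_t$ is diagonal, so zeroing off-diagonals does coincide with the constrained maximizer.
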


The proof for Theorem \ref{th:EM} and the expressions for the conditional expectations are given in Appendix \ref{ap:EM}. 
Note the dependency of $W^{k+1}$ and $C^{k+1}$ on the matrices $L_W$ and $L_C$. These matrices ensure orthogonality of $W^{k+1}$ and $C^{k+1}$ in each iteration:
\begin{equation*}
 (W^{k+1} )^\top W^{k+1} = L_W^{-1} \, \tilde{T}^\top \, X \, X^\top \, \tilde{T}  (L_W^\top )^{-1} = L_W^{-1}\,L_W\,L_W^\top\, (L_W^\top )^{-1} = I_r,
\end{equation*}
where $\tilde{T} = \mathrm{E}( T | X,Y,\theta^k )$. The exact forms of $L_W$ and $L_C$ are not unique. Two choices are the eigenvectors of $\mathrm{E}( T^\top | X,Y,\theta^k ) \, X \, X^\top \, \mathrm{E} ( T | X,Y,\theta^k  )$ and the lower triangular matrix of $X^\top \, \mathrm{E}( T | X,Y,\theta^k )$ in the Cholesky decomposition. Note that these two orthogonalization matrices are straightforward to calculate with standard linear algebra tools. Since the PPLS model is identifiable, all choices for $L_W$ and $L_C$ will lead to the same optimum as the iteration number $k$ tends to infinity. 

\paragraph{Standard errors for PPLS}
\changed{
Asymptotic standard errors for maximum likelihood estimators are found by inverting the observed Fisher information matrix. Following the reasoning of \cite{Louis1982}, the observed information may be given by
\begin{equation*}
\mathrm{E} \{ B(\hat{\theta}) | X, Y \} - \mathrm{E} \{ S(\hat{\theta})S(\hat{\theta})^\top | X, Y \}.
\end{equation*}
Here $S(\hat{\theta}) = \nabla \lambda(\hat{\theta})$ and $B(\theta) = -\nabla^2 \lambda(\hat{\theta})$ are the gradient and negative of the second derivative of the log likelihood $\lambda(\theta)$ respectively evaluated in the MLE $\hat{\theta}$. 
The explicit form of the asymptotic covariance matrix of $w_k$ is given in Appendix \ref{ap:SE}. The square root of the diagonal elements are the asymptotic standard errors for the corresponding loading estimates.}

\paragraph{Finding the number of components $r$}
\changed{
Available approaches to determine the number of PPLS components $r$ are minimizing a cross-validated loss function \cite{Geisser1993}, visually inspecting eigenvalues of a covariance matrix \cite{Mardia1979}, and selecting the number of components needed to achieve a certain proportion of variance explained by the components. In this paper we apply the last approach.
}

The PLS and PPLS algorithms are available as \textsf{R} packages at \url{github.com/selbouhaddani} under repository OmicsPLS and PPLS, respectively.

\section{Simulation study}
\label{sec:simu}
\changed{
To evaluate the performance of the PPLS estimates, a simulation study was conducted. The aim was (1) to investigate the performance of PPLS for various scenarios, (2) to evaluate robustness of the PPLS estimates against departures from the normality assumption, (3) to compare the performance of the loading estimates with other probabilistic approaches, and (4) to compare the asymptotic PPLS standard errors with the bootstrap standard errors.
}

The simulated data were generated according to the PPLS model \eqref{eq:PPLS}. The number of components was chosen to be $3$, both in the data generation and in the estimation. We considered combinations of small and large sample size ($N\in\{50,500\}$), low and high dimensionality ($p\in\{20,1000\}$), and small and large proportion of noise (denoted by $\alpha_n \in\{0.1,0.5\}$). The robustness of PPLS was evaluated by considering four different continuous and discrete distributions for the latent variables $t$, $u$, $e$, $f$ and $h$; we chose the normal distribution, the $t$ distribution with two degrees of freedom, the Poisson distribution with rate 1, and the Binomial distribution with two trials and success probability 0.25. These distributions cover a wide range of characteristics typically observed in omics data: heavy tailed, skewed and/or discrete. The latent variables were scaled to have mean zero and variances as specified below.
All scenarios are summarized in Table \ref{tb:simulation_scenarios}.

\changed{
The true loading values per component were generated from the normal density function with parameters $\mu$ and $\sigma$, denoted by $N(x; \mu, \sigma^2)$, as follows
\begin{equation*}
w_{j,k} = N\{j; (1/2 + 1/10j)p, 1/10p\}, \quad
c_{j,k} = N\{j; (3/5 + 1/10j)q, 1/10q\}.
\label{eq:SimuLoad}
\end{equation*}
The second columns in $W$ and $C$ were orthonormalized with respect to the first columns, and the third columns were orthonormalized with respect to the first two columns; we used a Gram--Schmidt procedure for both operations. 
The elements of the diagonal matrix $B$ were set to $b_{k} = \e^{\ln(1.5)-3(k-1)/10} = (1.5, 1.11, 0.82)$, for $\Sigma_t$ we chose $\sigma_{t_k} = \e^{-(k-1)/10} = (1, 0.90, 0.82)$. 
}

For comparing the parameter estimates with the true values $\theta$, we computed the bias and the variance of the estimates.
To deal with the identifiability up to sign, we multiplied each estimated loading vector by $-1$ if the inner product of the estimated loading vector and the true loading vector was negative. Moreover, we swapped columns in $W$ and $C$ to maintain the same ordering as the ordering in the true loadings. This was done to avoid inflation of the bias or variance due to a wrong sign or ordering of the individual components. 

\changed{
PPLS estimates were compared to PLS estimates (with orthogonal loadings, see \cite{Rosipal2005} for an overview) for all scenarios above. For comparing PPLS with PPCA and PCCA, we constructed a `null model', i.e.,  $B=0$, as well as $B\neq 0$. We used the same scenarios as above, but we only considered the normal distribution.
}

\begin{table}
\caption{{Overview of the simulation scenarios.} The noise level is defined as the proportion of variation in the noise matrices $E$, $F$ and $H$ relative to the total variation in $X$, $Y$ and $U$ respectively.}
\begin{tabularx}{\textwidth}{l||X}
{Sample size}	& $N = (50,\ 500)$ 										\\
{Dimensionality}	& $p = q = (20,\ 1000)$										\\
{Noise level}	& $\alpha_n = (0.1,\ 0.5)$									\\
{Distribution of $t$, $u$, $e$, $f$ and $h$}	& $\{\mathcal{N}(0,1),\ t_2,\ \mathcal{P}(1),\ \mathcal{BIN}(2, 0.3)\}$	\\
\end{tabularx}
\label{tb:simulation_scenarios}
\end{table}

\changed{
Regarding standard errors for PPLS loadings, we compared asymptotic standard errors (as in Section \ref{sec:Model}) and bootstrap standard errors \cite{Wehrens1997}. One set of two data matrices $X$ and $Y$ was simulated from a PPLS model with $p=q=20$ normally distributed variables. Based on these data, asymptotic and bootstrap standard errors were calculated. The number of bootstrap replicates was 1000. Furthermore, simulation-based standard errors for the loadings (based on standard deviations over 1000 data matrices drawn from the PPLS model used to generate the original data) were included as reference.
Low and high noise levels ($\alpha_n=0.1$ resp. $\alpha_n=0.5$), and small, large and `extra large' sample sizes ($N=50$, $N=500$ and $N=5000$, respectively) were considered. In the `extra large' sample size scenario, no simulation-based reference was calculated. The PPLS estimation algorithm was considered to be converged when either the log-likelihood increment was below $10^{-6}$, or $10^4$ EM steps were made. For each scenario, $1000$ replicates are used. 
}

\subsection{Results}
\paragraph{Results for the loadings}
The biases and variances \changed{of the estimated first component $W_1$} for the low dimensional case for normally distributed latent variables are graphically depicted in \changed{Figure \ref{fig:normlow}}. A black dot represents the average estimated PPLS loading value across 1000 simulations, whereas the width of the black dashed vertical line equals two times the standard deviation across 1000 simulations. The red star and red dashed vertical line represent the average loading value and twice the standard deviation for the PLS estimates. The true loading values are represented by a step function with steps at each index $j\in\{1,\ldots,p\}$. \changed{Results for other components and scenarios are included in the Online Supplement.}

Comparing the estimates for the \textit{first} loading component $W_1$, a better performance of PPLS compared to PLS was observed in terms of bias. In all scenarios the bias of the PPLS estimators were about the same as or less than the bias of the PLS estimators. Both estimators showed larger bias towards zero for higher absolute loading values. The biases decreased with a larger sample size and lower noise level. The biases of both estimators were very similar across different distributions. 
In the scenario where there is 50\% noise and few (50) samples the variance of the PPLS estimators tended to be slightly larger than the variances of the PLS estimators when the true loading values were larger. This was observed across all distributions. The variances of the PPLS estimates were about the same or lower than the PPLS estimates in all other scenarios, where either the noise level was less or more samples were available. For both PPLS and PLS estimators the variances tended to increase with higher loading values. The variances decreased with larger sample size and lower noise level. The variances of bots estimators were very similar across different distributions.
For the loading component $C_1$ and their PLS and PPLS estimators the same conclusions were obtained.

For the \textit{second} loading component $W_2$ \changed{(shown in the Online Supplement)}, the biases of the PPLS loading estimates were as good as, and often better than the PLS loading estimates, especially at lower values. In the scenarios of 50\% noise and a small sample size ($N=50$) the bias was slightly larger for PPLS estimators compared to PLS estimates when the loading values were larger. 
Both estimators showed larger bias towards zero for higher loading values. The biases decreased with a larger sample size and lower noise level. For all distributions, the biases of both estimators were very similar. 
The variances of the PPLS estimators were as good as or lower than the PLS estimators, except in the scenario in which both the noise level was high (50\%) and the sample size was small (50). In this scenario the variances of the PPLS estimators were still lower if the true loading values were close to zero, but higher for higher loading values. 
For both PPLS and PLS estimators the variances tended to increase with higher loading values. The variances decreased with larger sample size and lower noise level. The variances of both estimators were very similar across different distributions. 
For the loading component $C_2$ and their PLS and PPLS estimators the same conclusions were obtained.

For the \textit{third} loading components $W_3$ and $C_3$ \changed{(shown in the Online Supplement)}, the same observations were made as for the first loading components $W_1$ and $C_1$, both for the biases as for the variances.

For the high and extra high-dimensional case, the same results were obtained for the loadings $W$ and $C$. \changed{See the Online Supplement for more details.}

With regard to the comparison of PPLS with PPCA and PCCA, PPLS performed better than PCCA and similar to PPCA in most scenarios. \changed{Details are given in the Online Supplement.}

\paragraph{Results for the variance parameters}
The performance of the estimators of the variance parameters $B$, $\sigma_t$, $\sigma_e$, $\sigma_f$ and $\sigma_h$ were also evaluated, the results are shown in Figure \ref{fig:normlowV}. We did not compare with PLS as these model parameters are not present in the PLS framework. For sake of comparison, we calculated the relative biases and variances of the estimates with respect to the true corresponding parameter value.
The biases of the PPLS estimators for all variance parameters were very small for large sample size ($N=500$), regardless of the noise. For small sample size ($N=50$), the first two diagonal elements of $B$ and $\Sigma_t$ were overestimated, while the last component was underestimated. The noise parameters $\sigma_e$ and $\sigma_f$ were underestimated in these scenarios, while the estimator for $\sigma_h$ was unbiased, except in combination with a low noise level (10\%). The relative biases decreased slightly with lower noise level, except for the earlier mentioned $\sigma_h$, and decreased more with larger sample size.
The relative variances of the estimators of $B$, $\Sigma_t$ and $\sigma_h$ were larger than the variances of the estimators of $\sigma_e$ and $\sigma_f$. 
For $B$, there was a slight increase in variance across the three components. The variances decreased slightly with lower noise and more with larger sample size. The variances slightly decreased in the scenario of high dimensionality and high noise level.
The same observations were made across the different distributions.

\paragraph{Ordering of the loadings}
We compared the ordering of the true loadings $W$ and $C$ with the ordering of the estimated loadings. This provides a proportion across the 1000 simulation replicates in which the ordering matched. 
In Table \ref{tab:mismatchprop}, the proportion of correct orderings of $W$ for the scenario with normally distributed latent variables is shown for different scenarios. It can be seen that the proportion of correct orderings tends to be lower with smaller sample size and with higher noise level. Moreover, if the sample size is small, the proportion of correct orderings is much lower with higher noise. 
A higher dimensionality has a slightly negative impact on the correct ordering proportion when the sample size is larger, but a positive impact in the small sample size scenario. Especially, when also the noise level is high, this can be considerable. The same observations were made for the other distributions. Exactly the same proportions were observed for the loadings $C$.

\begin{table}
\caption{{Proportion of correct order of loadings $W$ and $C$ across 1000 simulation replicates.} These were obtained for different values of the dimensionality (high = 1000 variables, low = 20 variables), sample size (large = 500 subjects, small = 50 subjects) and noise level (high = 50\% noise, low = 10\% noise).}
\centering
\begin{tabularx}{.7\textwidth}{c|c|c|c}
{Dimensionality} & {Sample Size} & {Noise Level} & {Correct Ordering Proportion} \\
\toprule
\multirow{4}{*}{low}  & \multirow{2}{*}{large} & low  & 1.000 \\
  &  & high & 0.989 \\
  & \multirow{2}{*}{small}   & low  & 0.932 \\
  &   & high & 0.435 \\
\midrule
\multirow{4}{*}{high} & \multirow{2}{*}{large} & low  & 0.990 \\
 &  & high & 0.985 \\
 & \multirow{2}{*}{small}   & low  & 0.940 \\
 &   & high & 0.665 \\
\bottomrule
\end{tabularx}
\label{tab:mismatchprop}
\end{table}

\paragraph{Comparison of PPLS standard errors}
\changed{
The results for low noise level are shown in Figure \ref{fig:W_SE_nlow}. In all scenarios, the asymptotic standard errors were smaller than the bootstrap standard errors for nearly all loading elements. In particular, for high loading values the difference between asymptotic and bootstrap standard errors tended to be large. This difference decreased with larger sample size: In the `extra large' sample size, the bootstrap and asymptotic standard errors had similar magnitude. Similar observations were made for other distributions. For details, see the Online Supplement.
}

\begin{figure}
\begin{subfigure}[h]{.5\linewidth}
\centering
\includegraphics[width=\textwidth]{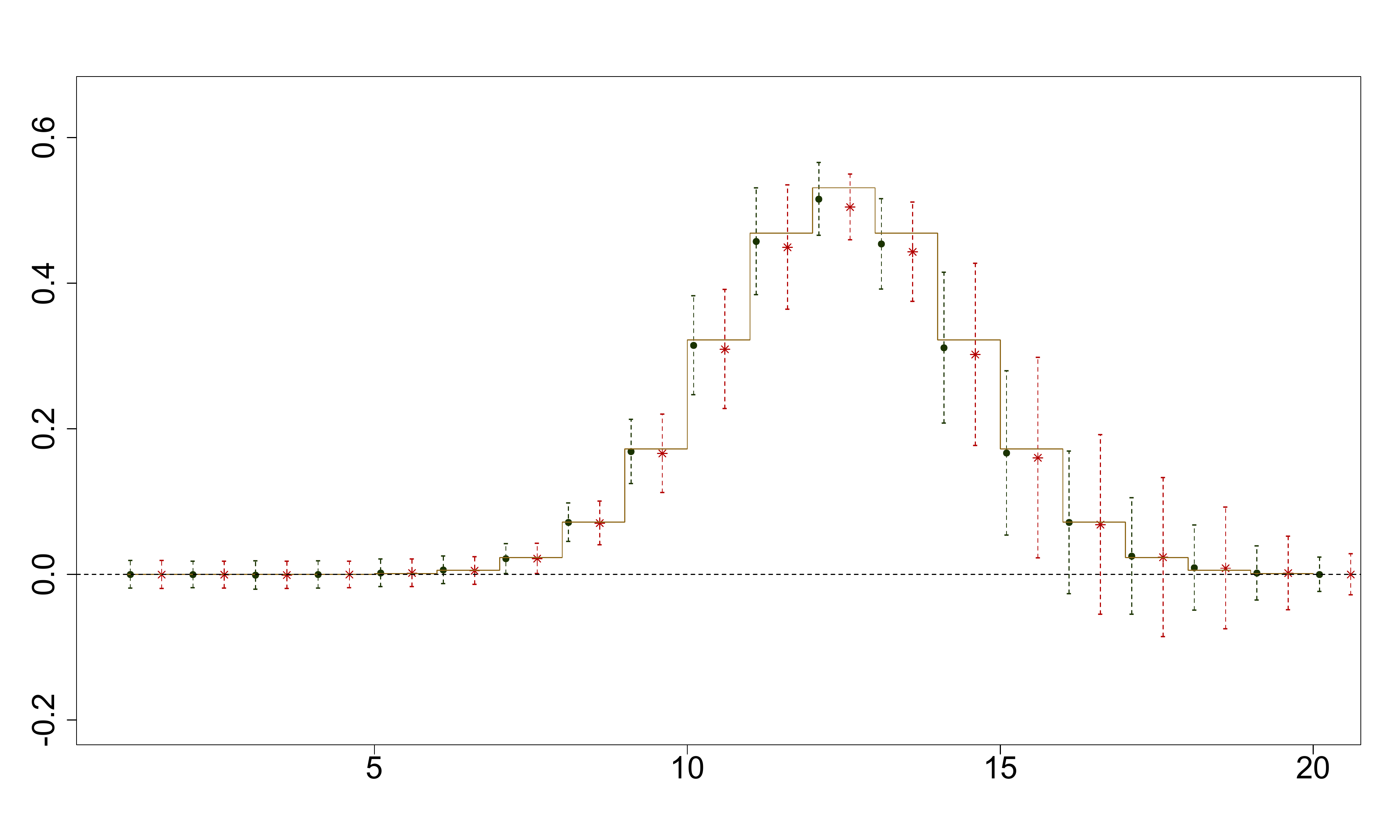}
\caption{Low noise (10\%); small sample size ($N=50$)}
\label{fig:normlow_lowlow}
\end{subfigure}%
\begin{subfigure}[h]{.5\linewidth}
\centering
\includegraphics[width=\textwidth]{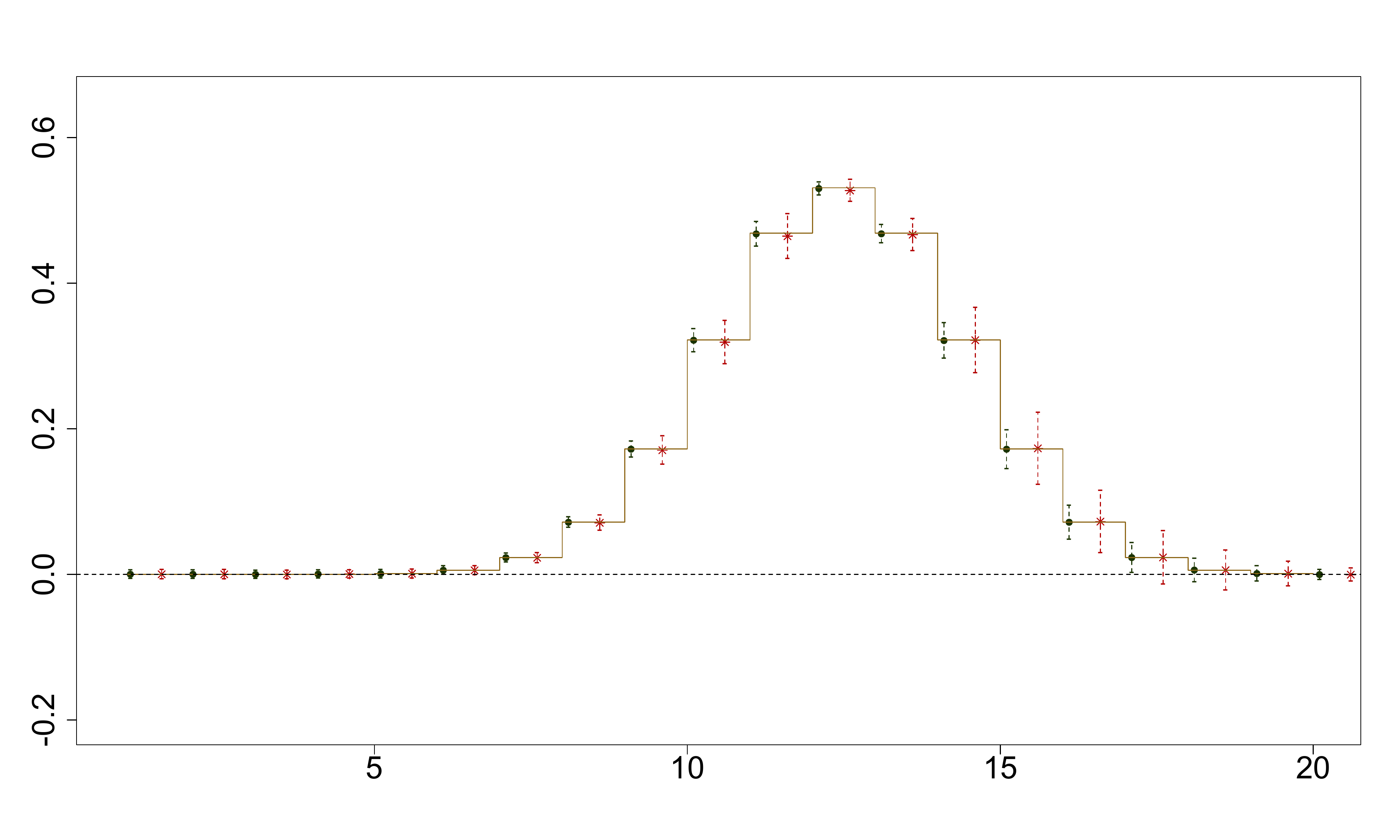}
\caption{Low noise (10\%); large sample size ($N=500$)}
\label{fig:normlow_lowhigh}
\end{subfigure}
\\
\begin{subfigure}[h]{.5\linewidth}
\centering
\includegraphics[width=\textwidth]{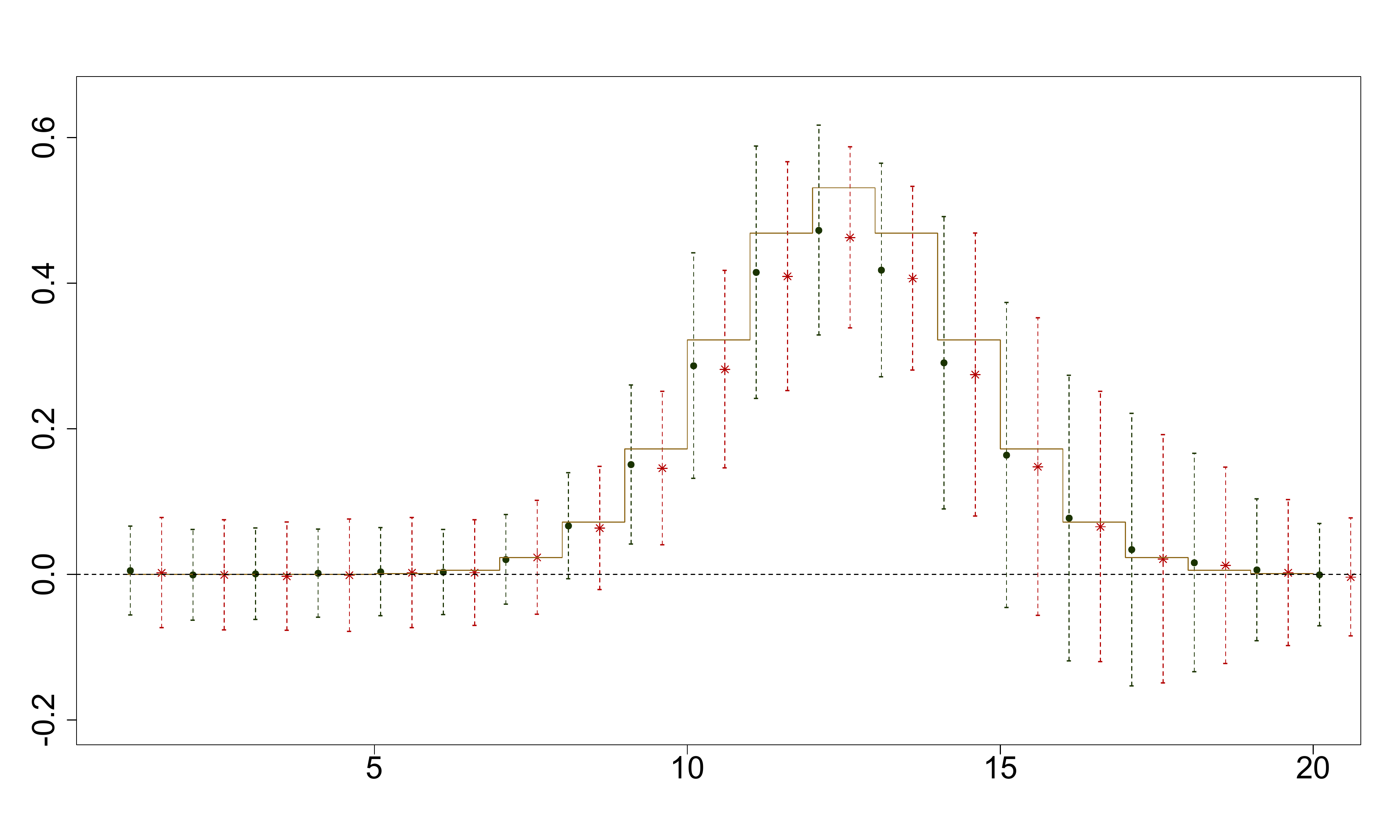}
\caption{High noise (50\%); small sample size ($N=50$)}
\label{fig:normlow_highlow}
\end{subfigure}
\begin{subfigure}[h]{.5\linewidth}
\centering
\includegraphics[width=\textwidth]{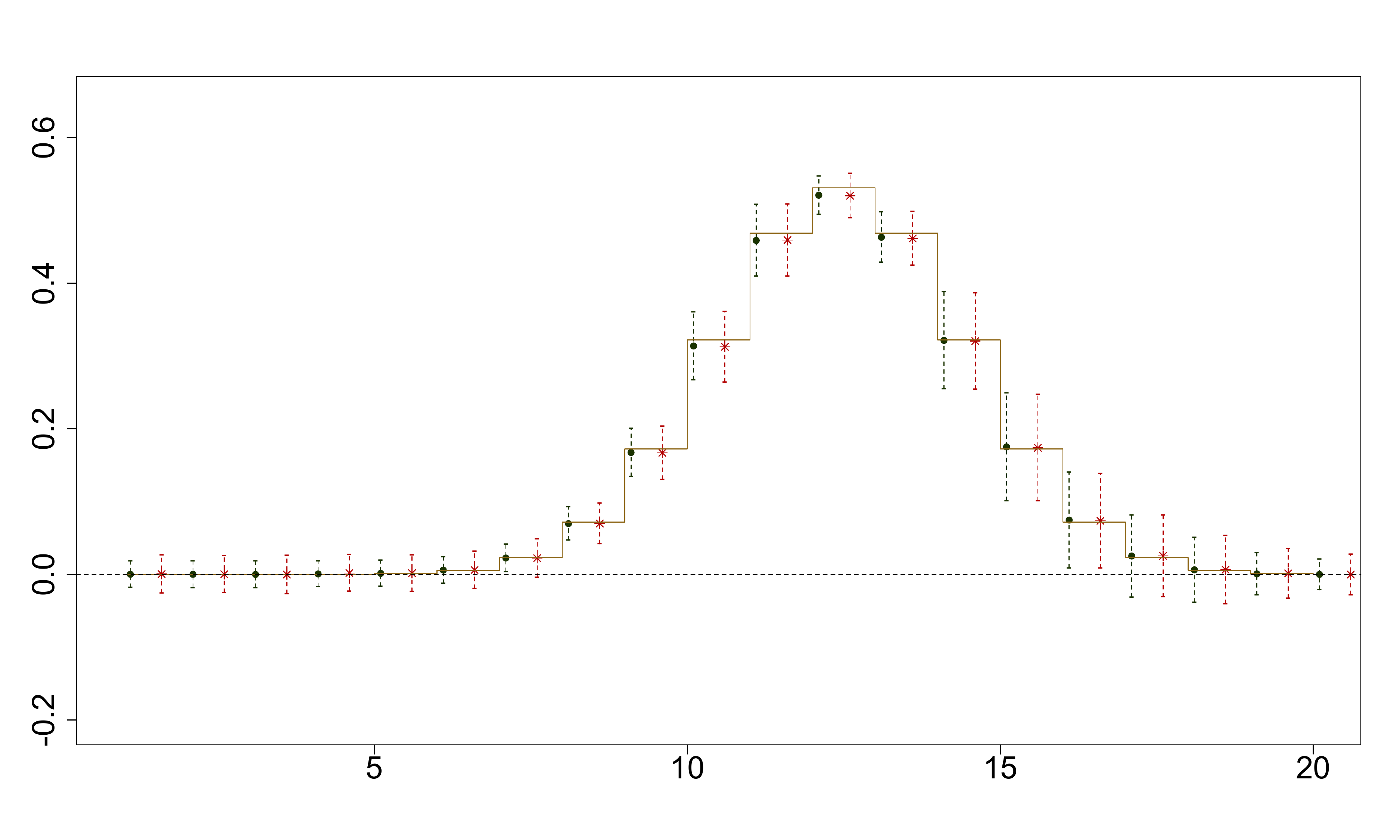}
\caption{High noise (50\%); large sample size ($N=500$)}
\label{fig:normlow_highhigh}
\end{subfigure}
\caption{\changed{{True and estimated loadings $W_1$ over 1000 simulation replications.}} The black dots and dashed vertical lines (on the left of each pair) represent PPLS estimates, the red stars and dashed vertical lines (on the right of each pair) represent PLS estimates. The dots and stars are the average loading values across 1000 simulation replications; the width of the dashed lines are twice the standard deviations. The results are for normally distributed latent variables ($t$, $e$, $f$ and $h$) and low dimensionality ($p=q=20$ variables).}
\label{fig:normlow}
\end{figure}

\begin{figure}
\begin{subfigure}[h]{.5\linewidth}
\centering
\includegraphics[width=\textwidth]{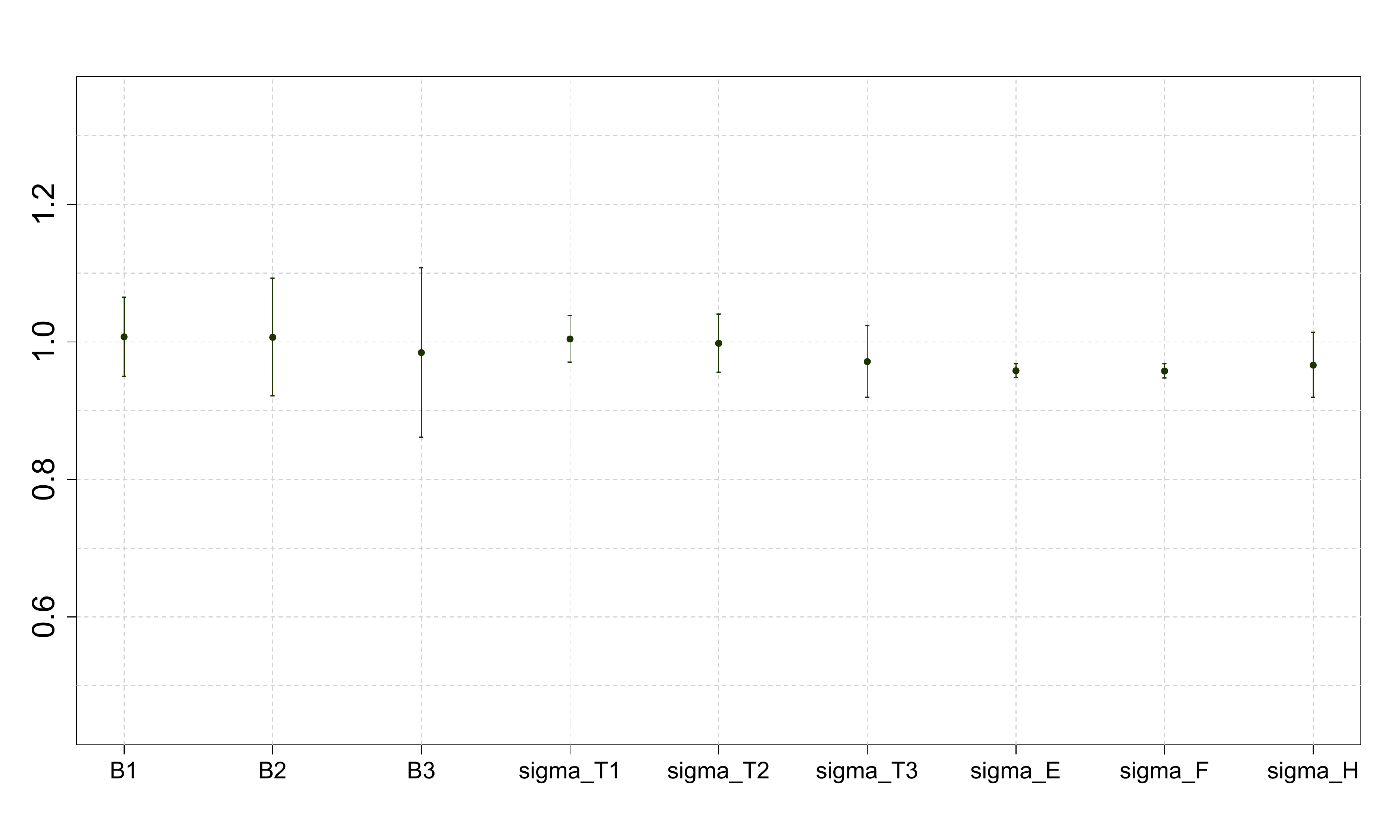}
\caption{Low noise (10\%); small sample size ($N=50$)}
\label{fig:normlowV_lowlow}
\end{subfigure}%
\begin{subfigure}[h]{.5\linewidth}
\centering
\includegraphics[width=\textwidth]{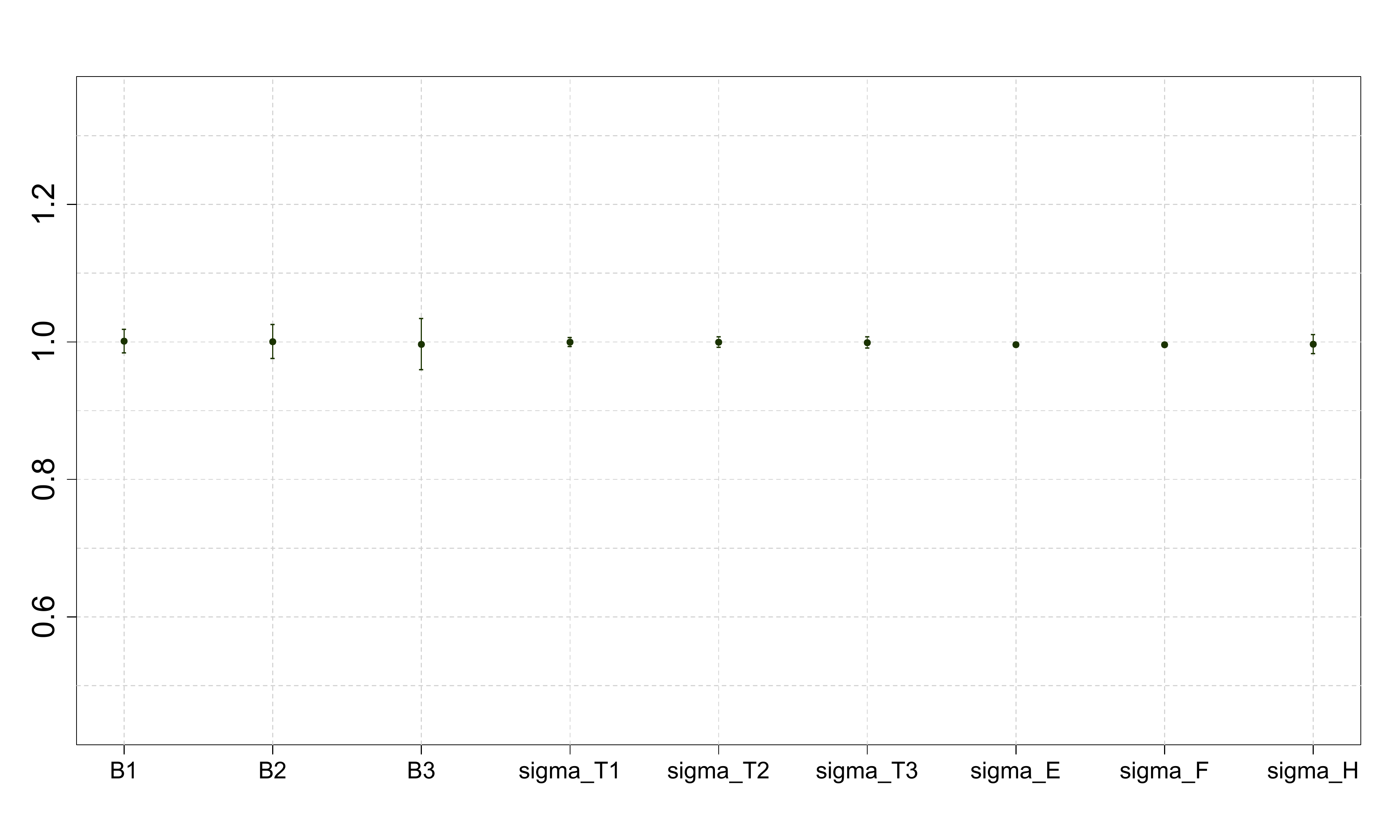}
\caption{Low noise (10\%); large sample size ($N=500$)}
\label{fig:normlowV_lowhigh}
\end{subfigure}
\\
\begin{subfigure}[h]{.5\linewidth}
\centering
\includegraphics[width=\textwidth]{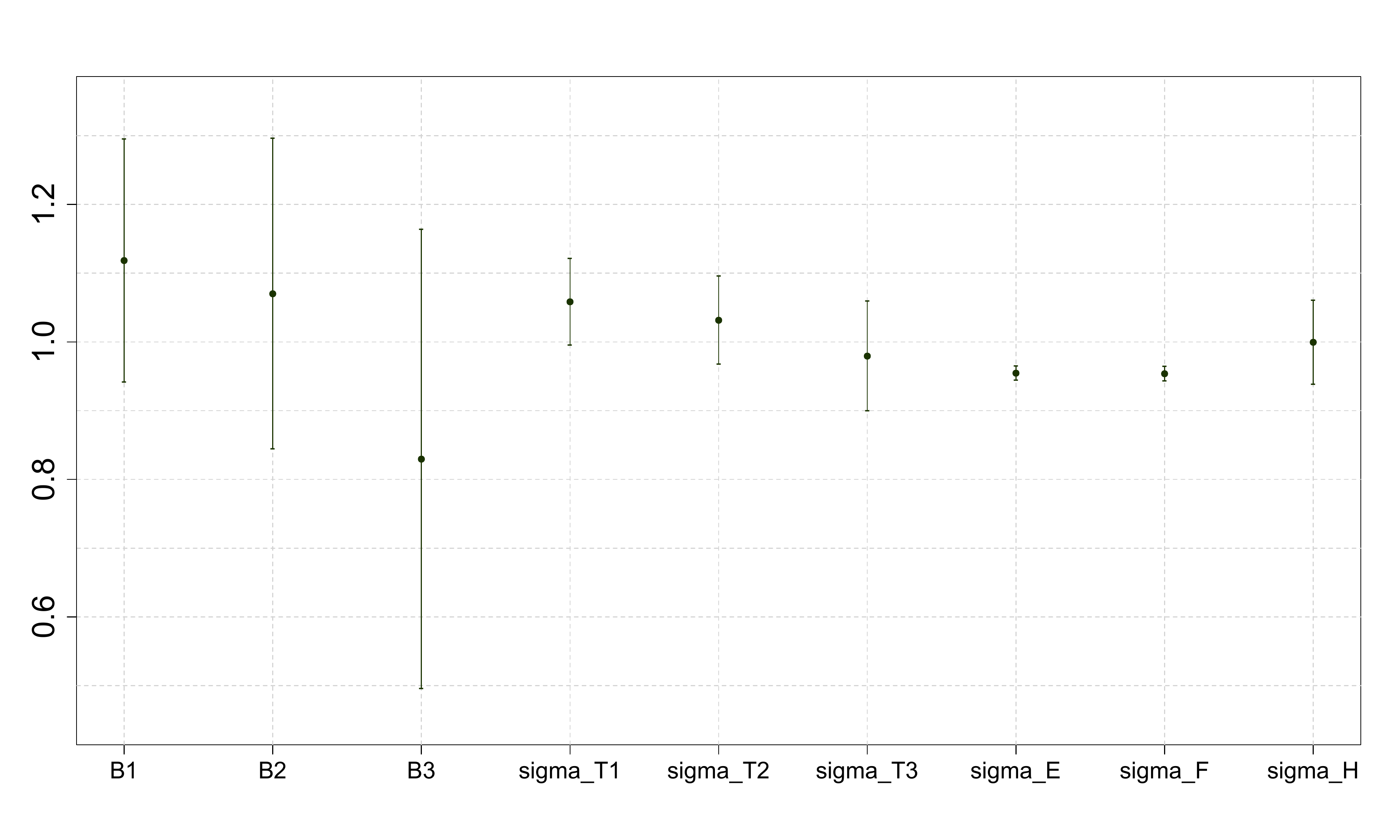}
\caption{High noise (50\%); small sample size ($N=50$)}
\label{fig:normlowV_highlow}
\end{subfigure}
\begin{subfigure}[h]{.5\linewidth}
\centering
\includegraphics[width=\textwidth]{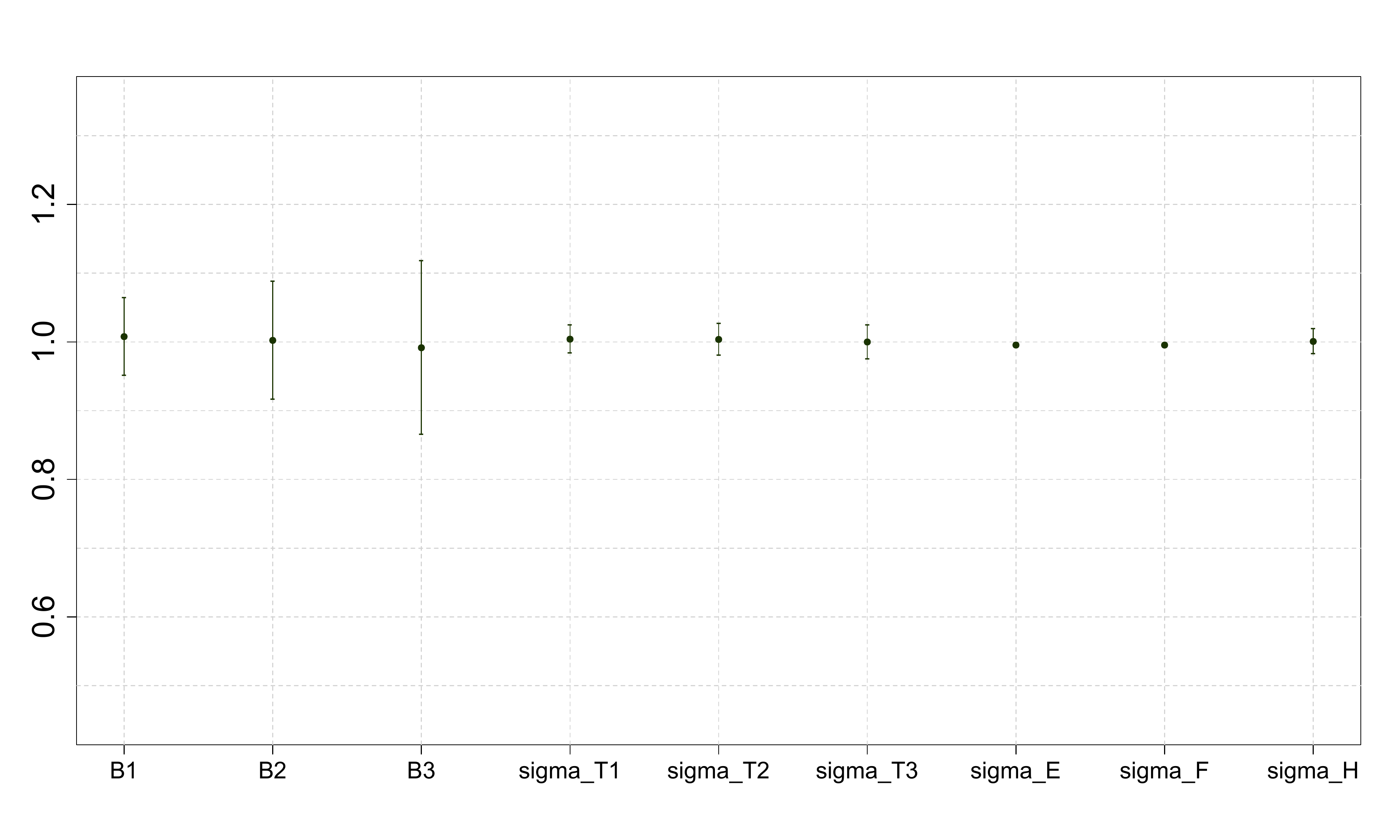}
\caption{High noise (50\%); large sample size ($N=500$)}
\label{fig:normlowV_highhigh}
\end{subfigure}
\caption{\changed{{True and estimated variance parameters $B$, $\Sigma_t$, $\sigma_e$, $\sigma_f$ and $\sigma_h$ over 1000 simulation replications.}} The dots are the average values across 1000 simulation replications; the width of the dashed lines are twice the standard deviations. The results are for normally distributed latent variables ($t$, $e$, $f$ and $h$) and low dimensionality ($p=q=20$ variables).}
\label{fig:normlowV}
\end{figure}

\begin{figure}
\centering
\includegraphics[width=\textwidth]{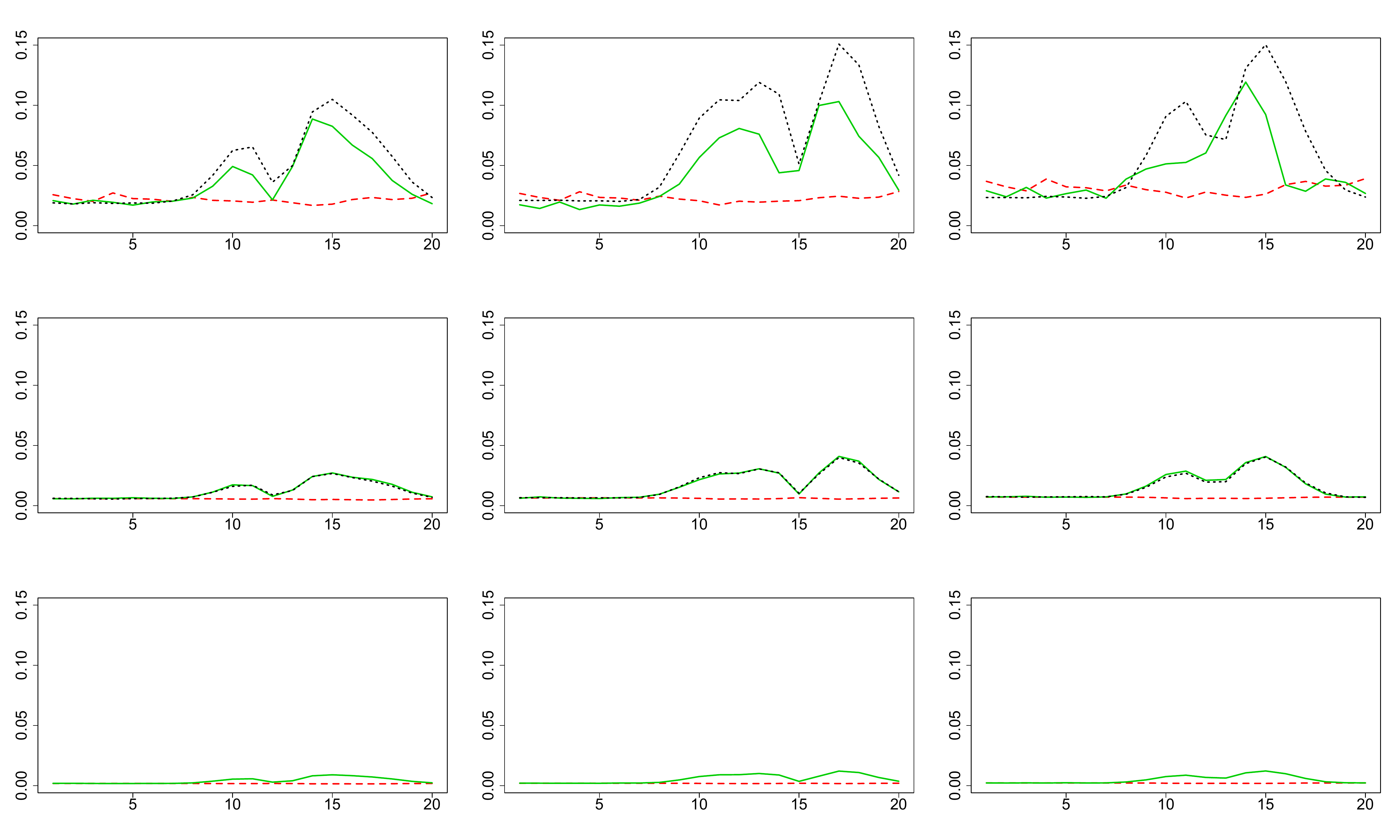}
\caption{\changed{{Standard errors of the $W$ loading elements per component.} Bootstrap standard errors (solid green line), asymptotic standard errors (dashed red line) and simulation-based standard errors (dotted black line) are plotted for the loading estimates in each component. Plots for the three sample sizes (small $N=50$, large $N=500$, `extra large' $N=5000$) are shown along the rows. The three loading components ($W_1$, $W_3$ and $W_3$) are plotted column wise. The last row does not include simulation-based standard errors. The data are generated from a normal distribution with $p=q=20$ variables and low noise level ($\alpha_n=0.1$).}}
\label{fig:W_SE_nlow}
\end{figure}

\section{Data analysis}
\label{sec:data}
\changed{
To illustrate the Probabilistic Partial Least Squares model, we apply it to IgG glycan datasets. Glycans, in particular IgG glycans, play an important role in the innate immune system, as well as in cell signaling. IgG2 glycans are less abundant than IgG1 glycans and more difficult to measure. Therefore, by using the relationships between IgG1 and IgG2 glycans, the characteristics of IgG2 can be better estimated. Hence, we will use IgG1 as $X$ matrix, and IgG2 as $Y$ matrix.
}

\changed{
In total, 40 IgG glycans were measured, of which $p=20$ are of subclass IgG1 and $q=20$ are of subclass IgG2. These data were measured in two cohorts (CROATIA\_{}Korcula with 951 participants and CROATIA\_{}Vis with 796 participants) \cite{Lauc2013}. The data matrices containing IgG1 and IgG2 glycan variables are denoted by $X_m$ and $Y_m$, with $m\in\{1,2\}$, where $m=1$ corresponds to CROATIA\_{}Korcula and $m=2$ corresponds to CROATIA\_{}Vis. 
We apply PPLS to IgG1 and IgG2 glycans in each cohort \textit{separately} and compare results. 
}

In Figure \ref{fig:Heatmap}, heatmaps of the correlations within and between the IgG1 and IgG2 glycans are shown, from which it is clear that there are many highly positive correlations between and within IgG1 and IgG2 in each data set.
\changed{The RV coefficient \cite{Robert1976}, which generalizes the \textit{squared} Pearson correlation coefficient to two matrices, was about 0.60 and 0.45 for CROATIA\_{}Korcula and CROATIA\_{}Vis cohorts respectively.}

To determine the number of latent variables to use, we considered the total amount of variance explained by the latent space relative to the total amount of variation in the data: $||T_m||/||X_m||$ and $||U_m||/||Y_m||$ for $m\in\{1,2\}$. By using four components, at least 89\% of the total variation in each of the matrices $X_1$, $X_2$, $Y_1$ and $Y_2$ was accounted for.

\changed{
For both cohorts, we fitted the PPLS models using $r=4$ latent components. The amount of overlap in each cohort, estimated by $\tr \hat{\Sigma}_{x,y} / \tr \hat{\Sigma}_y $ given in \eqref{eq:covmatrix}, was 58\% and 46\% for CROATIA\_{}Korcula and CROATIA\_{}Vis cohorts, respectively. The PPLS loadings are inspected to identify which IgG glycans contribute most to this overlap.}
The estimated IgG1 loadings $w_{j,k}$, $j\in\{1,\ldots,p\}$ and $k\in\{1,\ldots,4\}$, for both cohorts and both subclasses are depicted in Figure \ref{fig:Loadings}. The first joint component is proportional to the average glycan, as all glycans get about the same loading value. The second joint component involves especially G0 and G2 glycan subtypes, in which they are negatively correlated. Inspection of the loading values for the third component shows contibutions of fucosylated and non-fucosylated glycan subtypes. In the fourth component a pattern of positive and negative loading values is visible regarding the presence and absence of bisecting GlcNAc, respectively. The large loading value for G1NS is remarkable. The same conclusions hold for IgG2, as the estimated loading values were very similar. It is interesting to note that the observed patterns within components potentially reflect enzymatic synthesis where monosaccharides are added to glycan substrates \cite{Taniguchi2014}. Additionally, similar patterns are seen reflecting the inflammatory characteristics of glycans in aging and several different diseases \cite{Lauc2016}. Finally, the observed loading patterns were strikingly similar for both cohorts.



\begin{figure}
\begin{subfigure}[h]{\linewidth}
\centering
\includegraphics[width=0.75\textwidth,keepaspectratio]{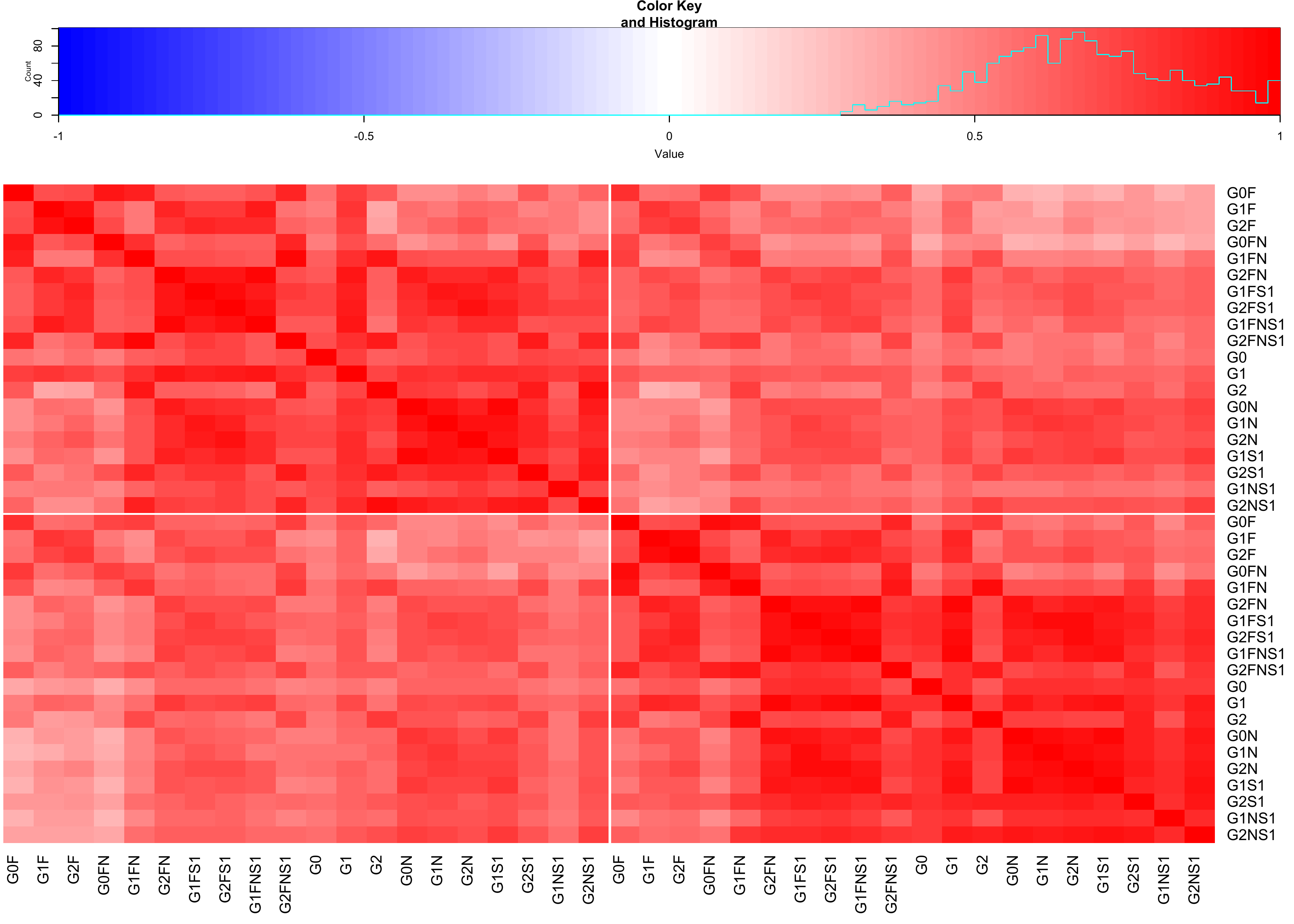}
\caption{The first (CROATIA\_{}Korcula) cohort}
\label{fig:Heatmap_Kor}
\end{subfigure}\\
\begin{subfigure}[h]{\linewidth}
\centering
\includegraphics[width=0.75\textwidth,keepaspectratio]{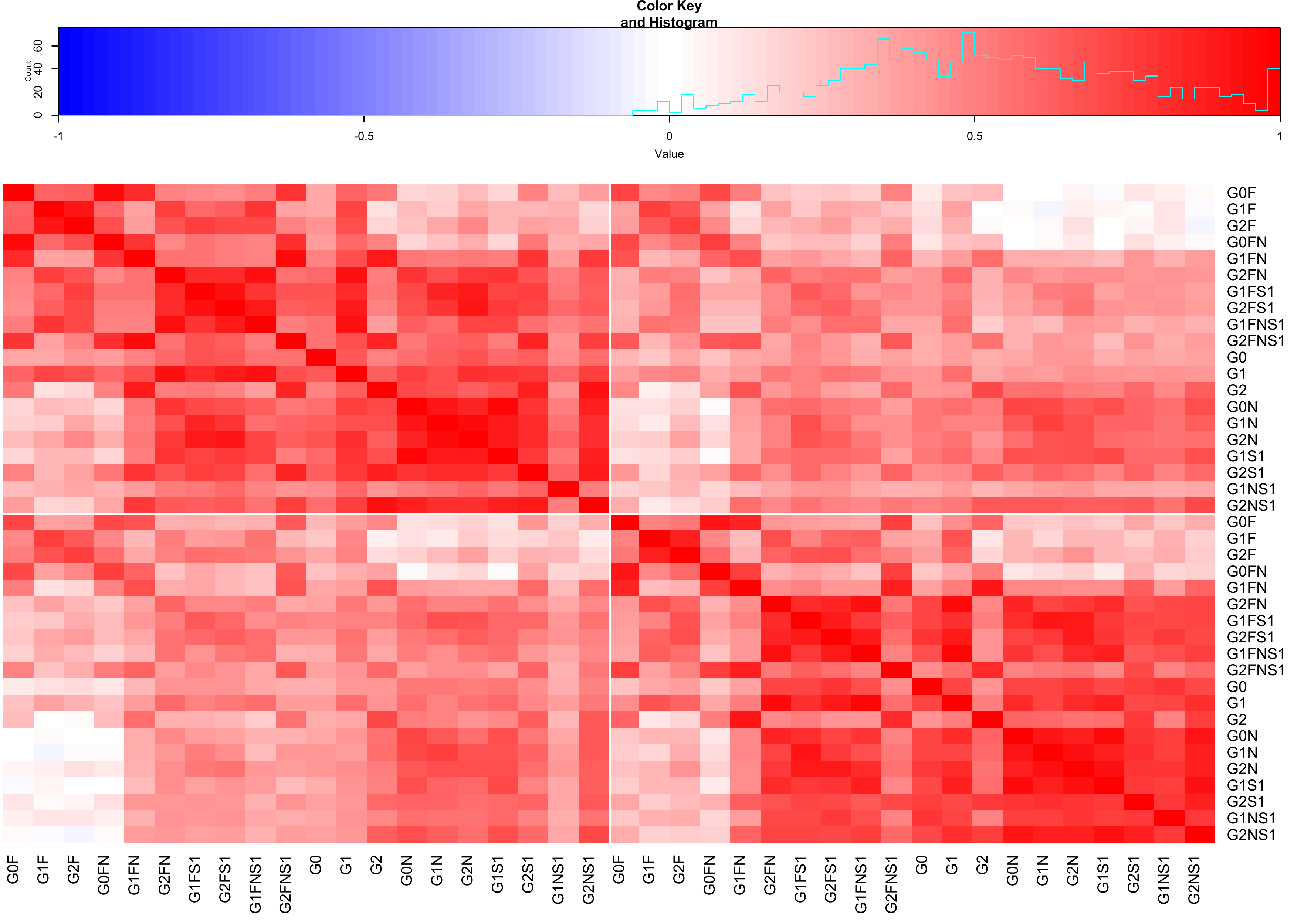}
\caption{The second (CROATIA\_{}Vis) cohort}
\label{fig:Heatmap_Vis}
\end{subfigure}
\caption{{Heatmaps of the correlations between IgG1 and IgG2 glycans.} In left the correlations of the CROATIA\_{}Korcula cohort is shown. In right the CROATIA\_{}Vis cohort is shown. The upper-left and lower-right block are the within subclass correlations, the off-diagonal block contains the correlations between IgG1 and IgG2 glycans. In both cohorts the glycans exhibit high positive correlations, especially between glycans within the IgG1 and IgG2 subclasses.}
\label{fig:Heatmap}
\end{figure}

\begin{figure}
\begin{subfigure}[h]{\linewidth}
\centering
\includegraphics[keepaspectratio,height=0.43\textheight]{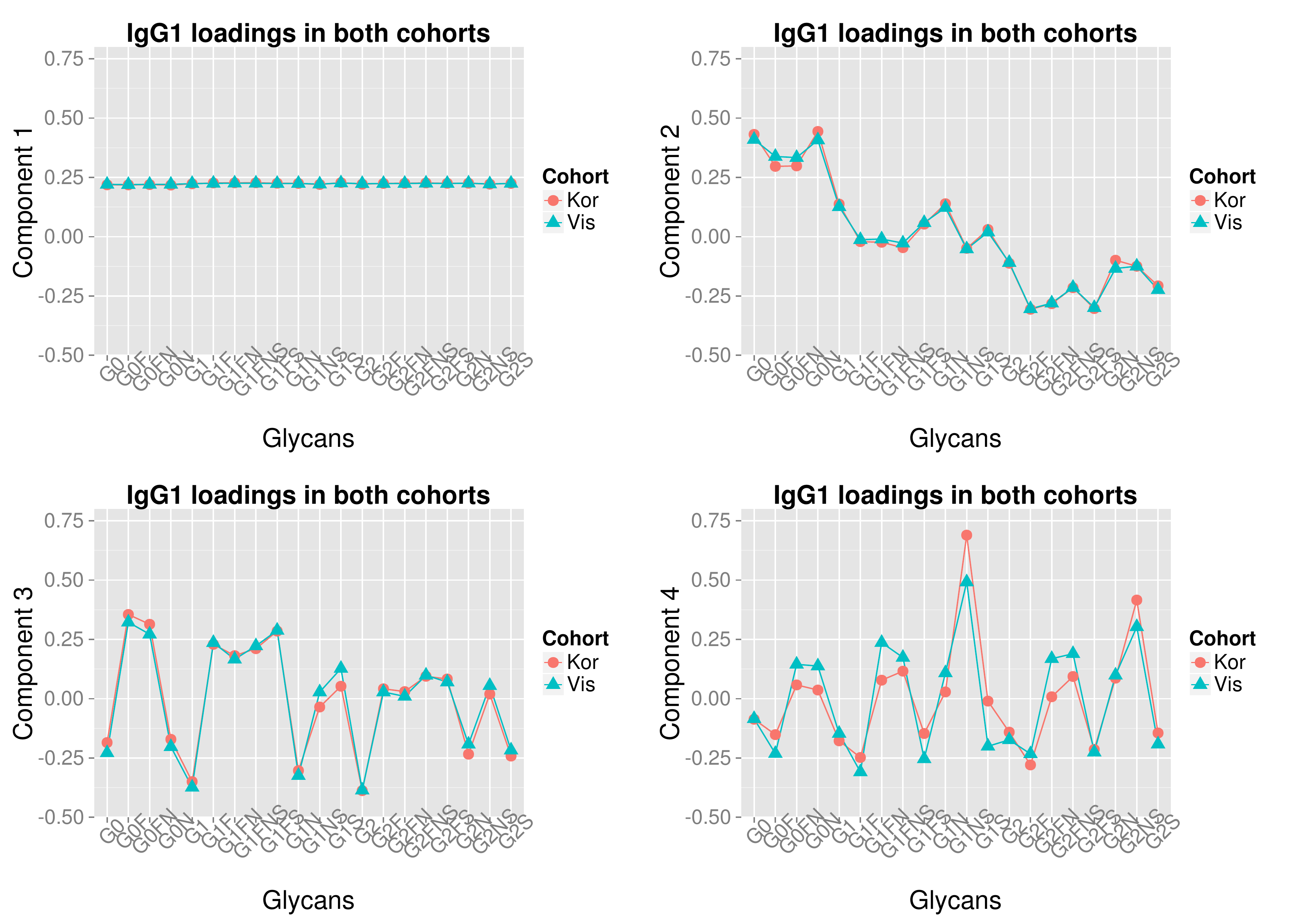}
\caption{The IgG1 glycan loadings ($W$) for both cohorts.}
\label{fig:Loadings_W}
\end{subfigure}
\\
\begin{subfigure}[h]{\linewidth}
\centering
\includegraphics[keepaspectratio,height=0.43\textheight]{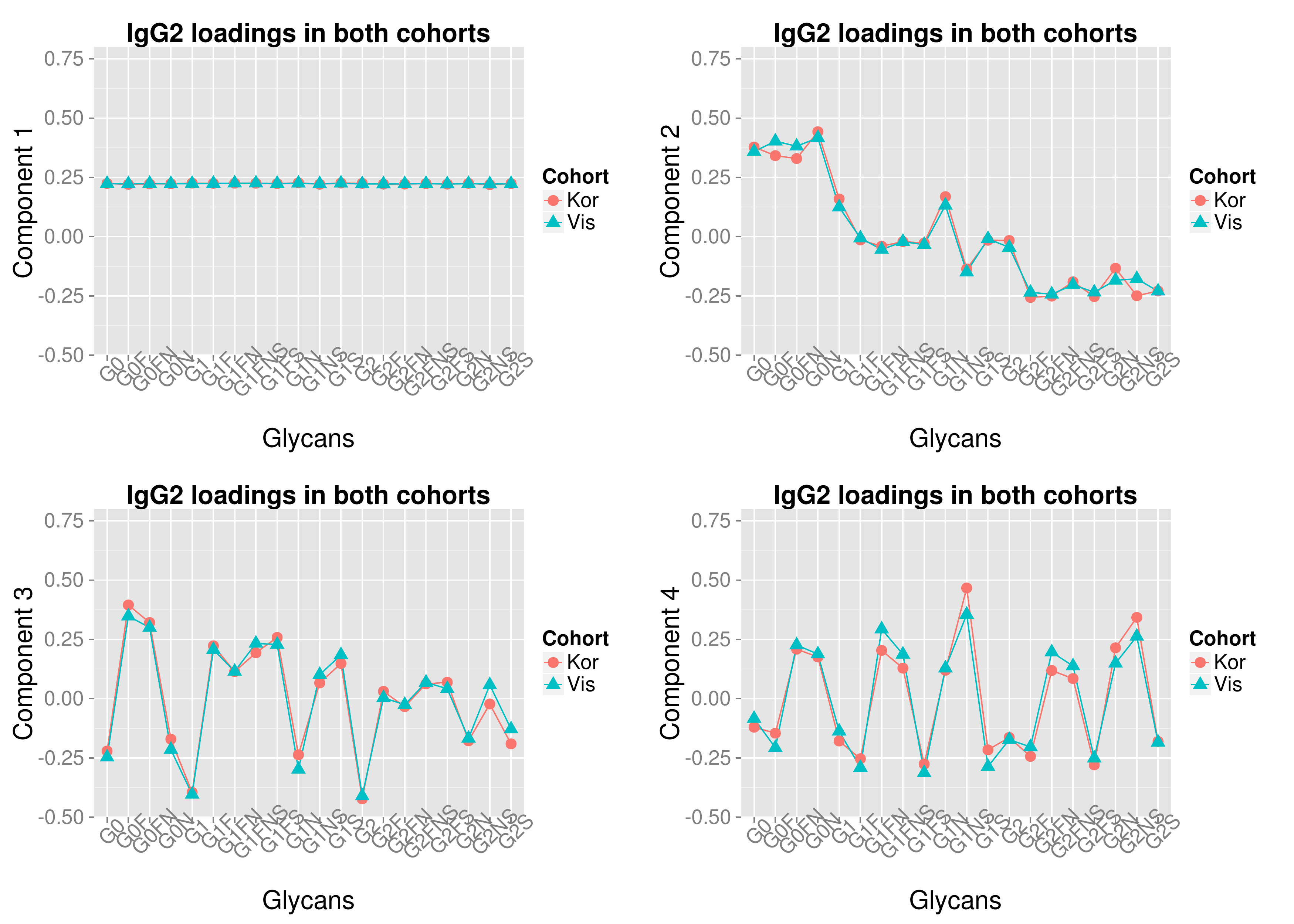}
\caption{The IgG2 glycan loadings ($C$) for both cohorts.}
\label{fig:Loadings_C}
\end{subfigure}
\caption{{Loadings per component for both cohorts.} In the top four plots loading values of IgG1 glycans ($W$) are plotted per glycan. The red dots connected by red lines are for the CROATIA\_{}Korcula loadings. The four loading vectors are plotted left-to-right and top-to-bottom. The blue triangles and lines are for the CROATIA\_{}Vis cohorts. In the bottom four plots the IgG2 glycan loading values are plotted in the same order and style.}
\label{fig:Loadings}
\end{figure}


\section{Discussion}
We proposed PPLS to model the covariance matrix of two data sets. 
Maximum likelihood estimators for the model parameters were derived by solving a constrained optimization problem, and the parameter loadings were shown to be identifiable up to sign. This property ensures that PPLS estimates are comparable across several studies.

Our simulation study showed that the PPLS estimators had good performance and lower bias compared to PLS. Most notably, the performance of PPLS was robust to misspecification of the distribution of the variables. A smaller sample size and high noise level had a negative effect on the accuracy of the estimates, but large loading values were still non-zero. \changed{Also, compared to Probabilistic CCA estimates, the PPLS estimates were less biased and more efficient. For high-dimensional data, PCCA estimates have larger bias and higher variance. This is likely to be caused by the unstable inverse sample covariance matrix calculated when using PCCA. Moreover, if the number of variables is larger than the sample size, PCCA estimates cannot be obtained. Therefore, especially in omics data analysis, PPLS provides more robust findings.
}

As an illustration of the PPLS model, we analyzed IgG glycomics data from two cohorts. The high correlations in the data (Figure \ref{fig:Heatmap}) and the use of multiple cohorts illustrate the applicability of PPLS to facilitate combination of results derived from different experimental settings. We found that the estimated loading values were almost identical across the two cohorts (Figure \ref{fig:Loadings}). 

When multiple cohorts are available, a meta-analysis on the parameter estimates can be performed. In ordinary regression models, this has been addressed for both low-dimensional \cite{DerSimonian1986} and high-dimensional \cite{He2016} design matrices. 
\changed{When there is no access to all data, PPLS estimates can be combined by using standard meta-analysis approaches \cite{DerSimonian1986}. Such an approach requires that the PPLS parameter estimates are identifiable and asymptotically normally distributed.
For the PLS framework, several approaches to combine estimates across cohorts were developed when there is access to all data. A group-PLS approach was considered \cite{Li2013} to incorporate several groups of subjects in the model. The authors showed that under certain assumptions this approach provided better predictions than a model without group effects. However their model is not identifiable and requires $N > p$.} Another method is based on weighted least squares to combine data from different studies with potentially different covariates \cite{Huang2005}. An alternative method, when access to data is possible, is to estimate joint parts between the data sets and the studies simultaneously. This yields a joint space with variables that have high loading values in most studies. For example, in \cite{VanDeun2009}, a non-probabilistic approach is pursued in a least squares estimation method using PCA. Performing data integration across studies, while taking into account uncertainties within each study, is one of our topics for future research, and will lead to more powerful analysis of the IgG glycans across cohorts.

To assess the statistical significance of loadings, the probabilistic framework provides alternative approaches to jackknifing and bootstrapping \cite{Wehrens1997}. \changed{The observed Fisher information matrix can be used to estimate standard errors for individual loading parameters. For small sample sizes, bootstrap approaches appears to better reflect the uncertainty of the parameters. For large enough sample sizes, the asymptotic standard errors are close to the simulation-based standard errors. Typically, in epidemiological studies, the sample size is large enough to use asymptotic standard errors.
}

In this paper we ignored the fact that different biological `omics' measurements have different error structures. 
An extension of Partial Least Squares was proposed to correct for systematic variation (variation induced by latent variables uncorrelated to the other data set) in the data sets, named Two-Way Orthogonal PLS (O2PLS) \cite{Bouhaddani2016,Trygg2003}. Such an extension can be pursued for PPLS by adding for both $X$ and $Y$ in \eqref{eq:PPLS} a set of independent latent variables multiplied by their loading parameters. We are currently working on exploring the possibilities of a Probabilistic O2PLS for heterogeneous data sets.

\section*{Acknowledgments}
The authors would like to thank the Editor-in-Chief, the Associate Editor and the referees for their valuable comments and suggestions.
This work has been supported by the European Union’s Seventh Framework Programme (FP7-Health-F5-2012) under grant agreement number 305280 (MIMOmics). 
The CROATIA\_{}Vis and CROATIA\_{}Korcula studies were funded by grants from the Medical Research Council (UK), European Commission Framework 6 project EUROSPAN (Contract No. LSHG-CT-2006-018947), FP7
contract BBMRI-LPC (grant No. 313010), Croatian Science Foundation (grant 8875) and the Republic of Croatia
Ministry of Science, Education and Sports (216-1080315-0302). We would like to acknowledge the staff of several institutions in Croatia that supported the field work, including but not limited to The University of Split and Zagreb Medical Schools, Institute for Anthropological Research in Zagreb and the Croatian Institute for Public Health. Glycome analysis was supported by the European Commission HighGlycan (contract No. 278535),
MIMOmics (contract No. 305280), HTP-GlycoMet (contract No. 324400), IntegraLife (contract No. 315997).
\changed{The IgG glycan data are available upon request.}

\begin{appendices}

\section{Variances and covariances}
\label{ap:theoCov}
The covariance matrix of $(x,y)$ is given in \eqref{eq:covmatrix}. First note that $\Var(u) = \Var(tB+h) = B^2 \Sigma_t + \sigma_h^2 I_r$, then compute
\begin{equation*}
\begin{split}
\Var(x) & = \Var (tW^\top + e ) = W\Var(t)W^\top + \Var(e) = W\Sigma_t W^\top + \sigma_e^2 I_p, \\
\Var(y) & = \Var (UC^\top + f ) = C\Var(u)C^\top + \Var(f) = C (B^2 \Sigma_t + \sigma_h^2 I_r) C^\top + \sigma_f^2 I_q,  \\
\Cov(x,y) & = \Cov (tW^\top + e,uC^\top+f ) = W\Cov(t,u)C^\top = W\Cov(t,tB)C^\top = W B \Sigma_t C^\top.
\end{split}
\end{equation*}
The covariances between the observed and latent variables are as follows
\begin{equation*}
\begin{split}
\Cov(x,t) & = \Cov (tW^\top + e,t ) = W\Var(t) = W\Sigma_t, \\
\Cov(x,u) & = \Cov (tW^\top + e,tB + h ) = W\Var(t)B = W\Sigma_t B, \\
\Cov(y,t) & = \Cov (uC^\top + f,t ) = C\Cov(tB+h,t) = C\Sigma_tB, \\
\Cov(y,u) & = \Cov (uC^\top + f,u ) = C\Cov(tB+h,tB+h) = C (\Sigma_tB^2 + \sigma_h^2 I_r  ).
\end{split}
\end{equation*}
See, e.g., \cite{Seber2003} for more details.

\section{Identifiability of PPLS}
\label{ap:identif}
For establishing identifiability of the PPLS model, we need to prove that if the distribution of $(x,y)$ is given, there is only one corresponding set of parameters yielding this distribution. Since $(x,y)$ follows a zero mean normal distribution, identifiability is equivalent to 
\begin{equation*}
\Sigma = \tilde{\Sigma} \quad \Leftrightarrow \quad \theta = \tilde{\theta},
\end{equation*}
where $\Sigma,\tilde{\Sigma}$ is defined, through $\theta,\tilde{\theta}$, in \eqref{eq:covmatrix}.
The following lemma will be very useful in establishing identifiability.

\begin{lemma}{{\rm (Singular Value Decomposition).}}
\label{lemma:SVD}
Let $W, \tilde{W}$ be $p\times r$ and $C, \tilde{C}$ be $q\times r$, all orthogonal matrices. Let $D, \tilde{D}$ be $r \times r$ diagonal with $r$ distinct positive elements on the diagonal.
Then $W D C^\top = \tilde{W} \tilde{D} \tilde{C}^\top$ (B.1) implies $W = \tilde{W} \Delta$, $C = \tilde{C} \Delta$ for some diagonal matrix $\Delta$ of size $r \times r$ with on the diagonal elements $\delta_i \in \{-1,1\}$ and $D = \tilde{D}$.
\end{lemma}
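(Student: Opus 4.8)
The plan is to recognize the hypothesis $WDC^\top = \tilde{W}\tilde{D}\tilde{C}^\top$ as the statement that two thin singular value decompositions represent one and the same $p\times q$ matrix, and then to invoke uniqueness of the SVD in the case of distinct singular values. Write $M = WDC^\top$. First I would pass to the symmetric positive semidefinite matrix $MM^\top$. Using $C^\top C = \tilde{C}^\top\tilde{C} = I_r$, the hypothesis gives
\begin{equation*}
W D^2 W^\top = MM^\top = \tilde{W}\tilde{D}^2\tilde{W}^\top.
\end{equation*}
This is a $p\times p$ symmetric matrix of rank $r$; its nonzero eigenvalues are exactly the diagonal entries of $D^2$ (equivalently $\tilde{D}^2$), the columns of $W$ (resp.\ $\tilde{W}$) are orthonormal eigenvectors for those eigenvalues, and the orthogonal complement of their span is the zero eigenspace.

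Second, since the $d_i$ are distinct and positive, the $d_i^2$ are distinct, so every nonzero eigenvalue of $MM^\top$ is simple. By the spectral theorem the multiset of eigenvalues is determined by the matrix, which yields $D^2 = \tilde{D}^2$ and hence $D = \tilde{D}$ by positivity (after fixing, say, the decreasing ordering of the diagonal entries, which rules out a permutation and matches the convention under which this lemma is applied). Moreover, each simple nonzero eigenvalue has a one-dimensional eigenspace, so its unit eigenvector is unique up to sign; comparing the corresponding columns of $W$ and $\tilde{W}$ gives $W = \tilde{W}\Delta_1$ for some diagonal sign matrix $\Delta_1$. Running the identical argument on $M^\top M = C D^2 C^\top = \tilde{C}\tilde{D}^2\tilde{C}^\top$ gives $C = \tilde{C}\Delta_2$ for a second sign matrix $\Delta_2$.

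Finally I would reconcile the two sign matrices. Substituting $W = \tilde{W}\Delta_1$, $C = \tilde{C}\Delta_2$ and $D = \tilde{D}$ back into the hypothesis, then left- and right-multiplying by $\tilde{W}^\top$ and $\tilde{C}$ (using orthonormality of their columns), reduces the identity to $\Delta_1 D \Delta_2 = D$. Since diagonal matrices commute this reads $D\Delta_1\Delta_2 = D$, and as $D$ is invertible we obtain $\Delta_1\Delta_2 = I_r$; because each $\Delta_i$ equals its own inverse, $\Delta_1 = \Delta_2 =: \Delta$, which completes the claim.

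The main obstacle is not any single computation but the careful exploitation of distinctness: it is precisely simplicity of the eigenvalues that upgrades ``the column spaces agree'' to ``the individual eigenvectors agree up to sign,'' and this is what forbids $\Delta$ from being a general orthogonal matrix. The one genuinely delicate point is excluding a permutation of the components, which is why an ordering convention on the diagonal of $D$ (matching that of $\tilde{D}$) must be fixed; without it the conclusion holds only up to a signed permutation rather than a pure sign matrix.
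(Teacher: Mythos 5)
Your proof is correct and follows essentially the same route as the paper's: both pass to $MM^\top = WD^2W^\top$ and $M^\top M = CD^2C^\top$ and invoke the spectral theorem together with distinctness of the eigenvalues. In fact your write-up is more careful on two points the paper glosses over — the explicit reconciliation $\Delta_1 = \Delta_2$ obtained by substituting back into $WDC^\top = \tilde{W}\tilde{D}\tilde{C}^\top$ (the paper simply asserts that the columns agree up to the \emph{same} sign), and the observation that an ordering convention on the diagonals of $D$ and $\tilde{D}$ is needed to rule out a permutation, which in the paper is only implicit through the model assumption that the products $\sigma_{t_k}^2 b_k$ are strictly decreasing.
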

\begin{proof}
Let $A_1 = W D C^\top$ and $A_2 = \tilde{W} \tilde{D} \tilde{C}^\top$. Consider $A_iA_i^\top$ and $A_i^\top A_i$, $i\in\{1,2\}$. 
The assertion (B.1) then implies the following.
\begin{equation*}
A_1 A_1^\top = W D^2 W^\top = \tilde{W} \tilde{D}^2 \tilde{W}^\top = A_2 A_2^\top;\quad
A_1^\top A_1 = C D^2 C^\top = \tilde{C} \tilde{D}^2 \tilde{C}^\top = A_2^\top A_2.
\end{equation*}
Note that both $W D^2 W^\top$ and $\tilde{W} \tilde{D}^2 \tilde{W}^\top$ are eigenvalue decompositions, as $D^2$ and $\tilde{D}^2$ are diagonal and $W,\tilde{W}$ and $C, \tilde{C}$ are orthogonal. The spectral theorem for matrices \cite{Eaton1983} then implies that whenever the elements in $D^2,\tilde{D}^2$ are distinct, the corresponding columns in $W,\tilde{W}$ and $C, \tilde{C}$ are equal up to multiplication with the same sign. We thus get $W = \tilde{W} \Delta$, $C = \tilde{C} \Delta$ and $D = \tilde{D}$.
\end{proof}
Using this lemma, we show identifiability of the off-diagonal block part of the covariance matrix as given in Eq.~\eqref{eq:covmatrix}.
\begin{lemma}
\label{th:XY}
If for matrices $W,\tilde{W}$, $C,\tilde{C}$ and diagonal $B,\tilde{B}$ and $\Sigma_{t},\tilde{\Sigma}_{t}$, given as in the PPLS model, $W \Sigma_{t}B C^\top = \tilde{W} \tilde{\Sigma}_{t}\tilde{B} \tilde{C}^\top$, then $W = \tilde{W} \Delta$, $C = \tilde{C} \Delta$ and $\Sigma_{t}B = \tilde{\Sigma}_{t}\tilde{B}$.
\end{lemma}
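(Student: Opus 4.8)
The plan is to recognize the identity $W\Sigma_{t}B C^\top = \tilde{W} \tilde{\Sigma}_{t}\tilde{B} \tilde{C}^\top$ as an instance of the factorization handled in Lemma \ref{lemma:SVD}, and then invoke that lemma directly. To set this up I would define $D = \Sigma_{t}B$ and $\tilde{D} = \tilde{\Sigma}_{t}\tilde{B}$. Since $\Sigma_{t}$ and $B$ are both diagonal, so is their product $D$, with diagonal entries $\sigma_{t_k}^2 b_k$; the same holds for $\tilde{D}$. The hypothesized equality then reads $W D C^\top = \tilde{W} \tilde{D} \tilde{C}^\top$, which is exactly premise (B.1) of Lemma \ref{lemma:SVD}.

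Next I would check that the remaining hypotheses of Lemma \ref{lemma:SVD} are met. Orthogonality of $W, \tilde{W}, C, \tilde{C}$ is part of the PPLS identifiability assumptions, namely $W^\top W = C^\top C = I_r$. It then remains to verify that $D$ (and likewise $\tilde{D}$) has $r$ distinct positive diagonal elements. This follows from the model assumptions: each $b_k > 0$ and each $\sigma_{t_k}^2 > 0$, so every product $\sigma_{t_k}^2 b_k$ is strictly positive; moreover the sequence $(\sigma_{t_k}^2 b_k)_{k=1}^r$ is assumed strictly decreasing in $k$, so its entries are pairwise distinct.

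With all hypotheses in place, Lemma \ref{lemma:SVD} yields at once $W = \tilde{W} \Delta$ and $C = \tilde{C} \Delta$ for a sign matrix $\Delta$ with $\delta_i \in \{-1,1\}$, together with $D = \tilde{D}$, i.e.\ $\Sigma_{t}B = \tilde{\Sigma}_{t}\tilde{B}$. This is precisely the assertion of the lemma, so no further work is needed.

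The main obstacle — indeed the only substantive point — is confirming the distinctness and positivity of the diagonal of $D$, since Lemma \ref{lemma:SVD} genuinely fails without distinct singular values: a repeated value would permit an arbitrary orthogonal rotation inside the corresponding eigenspace, breaking the conclusion. This is exactly why the strict ordering of $(\sigma_{t_k}^2 b_k)_{k=1}^r$ and the sign constraint $b_k > 0$ were imposed among the identifiability assumptions, and the proof amounts to little more than cashing in those assumptions to meet the premises of the preceding lemma.
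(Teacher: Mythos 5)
Your proof is correct and follows essentially the same route as the paper: both apply Lemma \ref{lemma:SVD} with $D = \Sigma_t B$ and $\tilde{D} = \tilde{\Sigma}_t \tilde{B}$, relying on the PPLS assumptions (orthogonal loadings, $b_k > 0$, $\sigma_{t_k}^2 > 0$, and strictly decreasing $\sigma_{t_k}^2 b_k$) to guarantee distinct positive diagonal entries. Your write-up is in fact more explicit than the paper's one-line proof in verifying these hypotheses, which is a point in its favor.
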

\begin{proof}
Applying Lemma \ref{lemma:SVD} with $D = \Sigma_{t} B$ and $\tilde{D} = \tilde{\Sigma}_{t}\tilde{B}$ gives the desired result, since $\Sigma_{t} B$ and $\tilde{\Sigma}_{t}\tilde{B}$ are diagonal matrices with distinct ordered elements.
\end{proof}
Given $\Sigma_{x,y}$ we can identify $W$ and $C$ up to sign and the product $\Sigma_{t}B$. We now show that in particular also the individual parameters $\Sigma_{t}$ and $B$ are identified from the upper diagonal block matrix $\Sigma_x$.
\begin{lemma}
\label{th:x}
If for matrix $W$, diagonal matrices $\Sigma_{t}$ and $\tilde{\Sigma}_{t}$ and positive numbers $\sigma_{e}^2,\tilde{\sigma}_e^2$, given as in the PPLS model,
$W \Sigma_{t} W^\top + \sigma_{e}^2 I_p= W \tilde{\Sigma}_{t} W^\top + \tilde{\sigma}_{e}^2 I_p$ (B.2), then $\sigma_{e} = \tilde{\sigma}_{e}$ and $\Sigma_{t} = \tilde{\Sigma}_{t}$.
\end{lemma}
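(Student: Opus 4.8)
The plan is to rearrange the hypothesis into a single matrix equation and then exploit the dimension gap $r < p$. Subtracting the right-hand side of (B.2) from the left-hand side gives
\begin{equation*}
W (\Sigma_t - \tilde{\Sigma}_t) W^\top = (\tilde{\sigma}_e^2 - \sigma_e^2) I_p .
\end{equation*}
The left-hand side is a product involving the $p \times r$ matrix $W$, so its rank is at most $r$. The right-hand side is a scalar multiple of $I_p$, which has rank $p$ unless the scalar vanishes. Since $r < p$, a matrix of rank at most $r$ can equal $(\tilde{\sigma}_e^2 - \sigma_e^2) I_p$ only if $\tilde{\sigma}_e^2 - \sigma_e^2 = 0$. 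Hence $\sigma_e^2 = \tilde{\sigma}_e^2$, and because both noise standard deviations are positive, $\sigma_e = \tilde{\sigma}_e$.

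With the scalar term eliminated, the equation reduces to $W (\Sigma_t - \tilde{\Sigma}_t) W^\top = 0$. To isolate the diagonal factor I would multiply on the left by $W^\top$ and on the right by $W$ and invoke the orthogonality $W^\top W = I_r$ assumed in the PPLS model:
\begin{equation*}
W^\top W (\Sigma_t - \tilde{\Sigma}_t) W^\top W = \Sigma_t - \tilde{\Sigma}_t = 0 ,
\end{equation*}
so $\Sigma_t = \tilde{\Sigma}_t$, which completes the argument.

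The step I expect to require the most care is the first one, where $\sigma_e = \tilde{\sigma}_e$ is established: this is exactly where the assumption $r < p$ (equivalently $r < \min(p,q)$ from Theorem \ref{th:identif}) is indispensable, and it mirrors the role of the strictly positive noise variance in probabilistic PCA. As a cross-check I might use an equivalent spectral route: $W \Sigma_t W^\top + \sigma_e^2 I_p$ has the $r$ eigenvalues $\sigma_{t_k}^2 + \sigma_e^2$ together with $\sigma_e^2$ repeated $p - r \ge 1$ times, so $\sigma_e^2$ is recovered as the smallest eigenvalue with its multiplicity and $\Sigma_t$ from the remaining spectrum. The rank computation is the cleaner of the two, as it avoids tracking eigenvalue multiplicities, so that is the route I would present.
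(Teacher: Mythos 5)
Your proof is correct and follows essentially the same route as the paper: both arguments exploit $r<p$ to force $\sigma_e^2=\tilde{\sigma}_e^2$ and then sandwich the remaining identity $W(\Sigma_t-\tilde{\Sigma}_t)W^\top=0$ between $W^\top$ and $W$, using $W^\top W=I_r$, to conclude $\Sigma_t=\tilde{\Sigma}_t$. The only cosmetic difference is in the first step, where you count ranks while the paper exhibits an explicit unit vector $w_\perp$ with $W^\top w_\perp=0$ and evaluates both sides on it --- the two are equivalent, since such a vector is exactly a witness to the rank deficiency you invoke.
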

\begin{proof}
Suppose (B.2) holds. Since $r < p$ and $p > 1$, one can find a unit vector $w_\perp$ such that $W^\top w_\perp = 0$. Multiplying with such vector yields
$\sigma_{e}^2 w_\perp = \tilde{\sigma}_{e}^2 w_\perp$.
Multiplying again with $w^\top_\perp$ yields
$\sigma_{e}^2 = \tilde{\sigma}_{e}^2$.
It follows that we can identify $\sigma_e^2$.
We can now reduce (B.2) to
$W \Sigma_{t} W^\top = W \tilde{\Sigma}_{t} W^\top$.
Pre-multiplying with $W^\top$ and post-multiplying with $W$ on both sides yields
$\Sigma_{t} = \tilde{\Sigma}_{t}$.
\end{proof} 
We have seen in Theorem \ref{th:XY} that we can identify $\Sigma_{t}B$. Since we identified $\Sigma_t$ we get identifiability of $B$. The remaining parameters $\sigma_h^2$ and $\sigma_f^2$ are now shown to be identified using the lower block diagonal $\Sigma_y$.
\begin{lemma}
\label{th:y}
If for matrices $C$, $B$, $\Sigma_{t}$, $\sigma_{f}^2, \tilde{\sigma}_f^2$ and $\sigma_{h}^2, \tilde{\sigma}_{h}^2$, given as in the PPLS model, the assertion $\Sigma_{y} = \tilde{\Sigma}_{y} $ holds, i.e.,  
\begin{equation*}
C (B^2 \Sigma_t + \sigma_{h}^2 I_r) C^\top + \sigma_{f}^2 I_q = C (B^2 \Sigma_t + \tilde{\sigma}_{h}^2 I_r) C^\top + \tilde{\sigma}_{f}^2 I_q,
\end{equation*}
then $\sigma_{f}^2 = \tilde{\sigma}_f^2$ and $\sigma_{h}^2 = \tilde{\sigma}_{h}^2$.
\end{lemma}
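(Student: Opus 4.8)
The plan is to follow exactly the strategy of the proof of Lemma~\ref{th:x}, exploiting that the term $CB^2\Sigma_t C^\top$ appears identically on both sides and therefore cancels. First I would expand both sides of the displayed identity, write $C(\sigma_h^2 I_r)C^\top = \sigma_h^2\,CC^\top$, cancel the common term $CB^2\Sigma_t C^\top$, and collect the remaining pieces into the single matrix equation
\[
(\sigma_h^2 - \tilde{\sigma}_h^2)\, CC^\top = (\tilde{\sigma}_f^2 - \sigma_f^2)\, I_q.
\]

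To isolate $\sigma_f^2$, I would use that $r < q$ (which holds since $r < \min(p,q)$), so the column space of $C$ is a proper subspace of $\R^q$ and there exists a unit vector $c_\perp$ with $C^\top c_\perp = 0$. Post-multiplying the equation above by $c_\perp$ makes the left-hand side vanish, since $CC^\top c_\perp = C(C^\top c_\perp) = 0$, leaving $(\tilde{\sigma}_f^2 - \sigma_f^2)\, c_\perp = 0$; as $c_\perp \neq 0$ this forces $\sigma_f^2 = \tilde{\sigma}_f^2$.

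With $\sigma_f^2 = \tilde{\sigma}_f^2$ in hand, the equation reduces to $(\sigma_h^2 - \tilde{\sigma}_h^2)\, CC^\top = 0$. I would then pre-multiply by $C^\top$ and post-multiply by $C$ and invoke the orthogonality assumption $C^\top C = I_r$ to obtain $(\sigma_h^2 - \tilde{\sigma}_h^2)\, I_r = 0$, hence $\sigma_h^2 = \tilde{\sigma}_h^2$, completing the identification.

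I do not anticipate a genuine obstacle: the argument is essentially identical to Lemma~\ref{th:x}, with $C$ and $q$ playing the roles of $W$ and $p$. The only point requiring care is that the two unknowns must be peeled off in the correct order---first $\sigma_f^2$ via a vector orthogonal to the columns of $C$, and then $\sigma_h^2$ via the orthogonality $C^\top C = I_r$---so that each step eliminates exactly one unknown at a time.
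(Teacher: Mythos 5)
Your proof is correct and is essentially the paper's own argument: the paper proves this lemma by invoking Lemma~\ref{th:x} with $C$ in place of $W$, $\sigma_f^2,\tilde{\sigma}_f^2$ in place of $\sigma_e^2,\tilde{\sigma}_e^2$, and the diagonal matrices $B^2\Sigma_t + \sigma_h^2 I_r$ and $B^2\Sigma_t + \tilde{\sigma}_h^2 I_r$ in place of $\Sigma_t$ and $\tilde{\Sigma}_t$, which is exactly the argument (a unit vector orthogonal to the columns of $C$ to extract $\sigma_f^2$, then $C^\top C = I_r$ to extract the rest) that you have written out inline. The only cosmetic difference is that you cancel the common term $CB^2\Sigma_t C^\top$ at the outset, whereas the paper carries it inside the substituted diagonal matrix, identifies $B^2\Sigma_t + \sigma_h^2 I_r$ as a whole, and then uses the previously identified $\Sigma_t$ and $B$ to isolate $\sigma_h^2$.
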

\begin{proof}
In Theorem \ref{th:x} take $W$ equal to $C$, $\sigma_e^2$ equal to $\sigma_f^2$, $\tilde{\sigma}_e^2$ equal to $\tilde{\sigma}_f^2$, and the diagonal covariance matrices $\Sigma_t$ and $\tilde{\Sigma}_t$ equal to $\Sigma_t B^2 + \sigma_h^2 I_p$ and $\Sigma_t B^2 + \tilde{\sigma}_h^2 I_p$. We find that we can identify $\Sigma_{t}B^2 + \sigma_h^2$ and $\sigma_f^2$. Since we already identified $\Sigma_{t}$ and $B$, we have also identifiability of $\sigma_h^2$.
\end{proof}
We conclude with the proof of Theorem \ref{th:identif}.
\begin{proof}
Suppose $\Sigma = \tilde{\Sigma}$. This is true if and only if 
\begin{equation}
\tag{B.1}
\Sigma_{x,y} = \tilde{\Sigma}_{x,y}, \quad
\Sigma_{x} = \tilde{\Sigma}_{x}, \quad
\Sigma_{y} = \tilde{\Sigma}_{y}. 
\label{eq:blocks}
\end{equation}
Applying Lemma \ref{th:XY} to the first equation, we identify $W$ and $C$ up to sign. 
Considering Lemma \ref{th:x} together with Lemma \ref{th:XY}, the second equation implies identifiability of $\Sigma_t$, $B$ and $\sigma_e$.
The three Lemmas \ref{th:XY}, \ref{th:x} and \ref{th:y} together with the last equation imply identifiability of $\sigma_h$ and $\sigma_f$.

\end{proof}

\section{An Expectation-Maximization algorithm for PPLS}
\label{ap:EM}
To obtain parameter estimates in the PPLS model, maximum likelihood is used. The EM algorithm is an iterative procedure for maximizing the observed log-likelihood \eqref{eq:loglike_obs} and consists of an Expectation step and a Maximization step. The following Lemma is convenient to make the expectation step explicit.
\begin{lemma}\label{lemma:schur}
    Let the pair $(z,x)$ be jointly multivariate normal row vectors with {zero} mean and covariance matrix
\begin{equation*}
	\begin{pmatrix} \Sigma_z & \Sigma_{z,x} \\ \Sigma_{x,z} & \Sigma_x \end{pmatrix}.
\end{equation*}
Then $z|x$ is normally distributed with conditional mean $\mathrm{E}\left( z|x \right) = x\,\Sigma_x^{-1}\,\Sigma_{x,z}$,
and conditional covariance matrix $\Var \left( z|x \right) = \Sigma_z - \Sigma_{z,x}\,\Sigma_x^{-1}\,\Sigma_{x,z}$.
Secondly, if $z=(t,u)$, $\Cov(t,x)=\Sigma_{t,x}$ and $\Cov(x,u)=\Sigma_{x,u}$, then the conditional covariance between $t$ and $u$ is $\Cov(t,u|x) = \Cov(t,u) - \Sigma_{t,x}\,\Sigma_x^{-1}\,\Sigma_{x,u}$.
\end{lemma}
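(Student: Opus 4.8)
The plan is to prove this by the standard Gaussian regression (orthogonalization) argument rather than by a brute-force manipulation of the block inverse of the joint covariance matrix, since the former sidesteps the Schur-complement algebra entirely. Working throughout with the row-vector convention used in the statement, I would first define the residual row vector
\begin{equation*}
w = z - x\,\Sigma_x^{-1}\,\Sigma_{x,z},
\end{equation*}
which is the candidate for $z$ after its linear dependence on $x$ has been removed. Because $w$ is a linear function (the means being zero) of the jointly normal pair $(z,x)$, the pair $(w,x)$ is again jointly multivariate normal.

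The core of the argument is to show that $w$ and $x$ are uncorrelated. Using $\Sigma_{x,z}^\top = \Sigma_{z,x}$ and the symmetry of $\Sigma_x$, I would compute
\begin{equation*}
\Cov(w,x) = \Sigma_{z,x} - \Sigma_{z,x}\,\Sigma_x^{-1}\,\Sigma_x = 0.
\end{equation*}
Since $(w,x)$ is jointly normal, zero correlation upgrades to full independence. Writing $z = w + x\,\Sigma_x^{-1}\,\Sigma_{x,z}$ and conditioning on $x$, the second summand is a deterministic function of $x$ while $w$ keeps its unconditional law; hence $z\mid x$ is normal with mean $x\,\Sigma_x^{-1}\,\Sigma_{x,z}$ and covariance $\Var(w)$. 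A short expansion with $M = \Sigma_x^{-1}\,\Sigma_{x,z}$ gives
\begin{equation*}
\Var(w) = \Sigma_z - \Sigma_{z,x}\,M - M^\top\,\Sigma_{x,z} + M^\top\,\Sigma_x\,M = \Sigma_z - \Sigma_{z,x}\,\Sigma_x^{-1}\,\Sigma_{x,z},
\end{equation*}
where the three cross terms collapse to a single copy of $\Sigma_{z,x}\Sigma_x^{-1}\Sigma_{x,z}$. This settles the first two assertions.

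For the final assertion I would partition $z = (t,u)$ so that $w = (w_t, w_u)$ with $w_t = t - x\,\Sigma_x^{-1}\,\Sigma_{x,t}$ and $w_u = u - x\,\Sigma_x^{-1}\,\Sigma_{x,u}$. Given $x$, each of $t$ and $u$ differs from $w_t$ and $w_u$ only by a deterministic shift, so $\Cov(t,u\mid x) = \Cov(w_t, w_u\mid x)$, and the independence of $w$ from $x$ means this conditional covariance equals the unconditional one; evaluating it as above yields $\Cov(t,u) - \Sigma_{t,x}\Sigma_x^{-1}\Sigma_{x,u}$. The only genuine care needed is bookkeeping with the row-vector convention, keeping track of which factors are transposed so that the identity $\Cov(a,b) = \mathrm{E}(a^\top b)$ is applied consistently, together with invoking joint normality at the precise point where uncorrelatedness is promoted to independence. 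Neither of these is a substantive obstacle, but both must be stated with care.
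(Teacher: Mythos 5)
Your proof is correct, and it supplies more than the paper itself does: the paper's entire proof of the first part is a citation to Seber and Lee's textbook, with the second part dispatched by the remark that it ``follows from the off diagonal blocks of $\Var(z|x)$.'' What you have written out is the classical orthogonalization (Gaussian regression) argument underlying that textbook result: setting $w = z - x\,\Sigma_x^{-1}\,\Sigma_{x,z}$, verifying $\Cov(w,x)=0$ under the row-vector convention $\Cov(a,b)=\mathrm{E}(a^\top b)$ for zero-mean vectors, upgrading uncorrelatedness to independence via joint normality, and reading the conditional law off the decomposition $z = w + x\,\Sigma_x^{-1}\,\Sigma_{x,z}$; your expansion of $\Var(w)$, with the three cross terms collapsing to a single copy of $\Sigma_{z,x}\Sigma_x^{-1}\Sigma_{x,z}$, checks out. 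Your treatment of the second assertion --- conditioning leaves $(w_t,w_u)$ with its unconditional joint law, so $\Cov(t,u|x)=\Cov(w_t,w_u)=\Cov(t,u)-\Sigma_{t,x}\Sigma_x^{-1}\Sigma_{x,u}$ --- is precisely the paper's ``off-diagonal block'' observation made explicit, since $\Cov(w_t,w_u)$ is the $(t,u)$ block of $\Var(w)=\Var(z|x)$. The trade-off between the two routes is simply this: the paper buys brevity by treating the result as classical and outsourcing it, while your version is self-contained and makes the row-vector bookkeeping explicit, which is genuinely useful here because the paper later applies the lemma with non-symmetric cross-covariance blocks, exactly where transposition errors are easiest to make.
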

\begin{proof}
The proof for the first part of the Lemma is found in \cite{Seber2003}. The second part follows from the off diagonal blocks of $\Var(z|x)$.
\end{proof}

\paragraph{Expectation}
The conditional first moments can be obtained by applying Lemma \ref{lemma:schur} while substituting $t$ or $u$ for $z$ and $(x,y)$ for $x$.
\begin{equation*}
\mu_t = \mathrm{E}\left( t | x,y,\theta \right) = (x,y)\,\Sigma^{-1}\,\Cov\{(x,y),t\}, \quad
\mu_u = \mathrm{E}\left( u | x,y,\theta \right) = (x,y)\,\Sigma^{-1}\,\Cov\{(x,y),u\}.
\end{equation*}
The same substitution can be made for the conditional second moments. Using $\mathrm{E}(a^\top b|z) = \Cov(a,b|z) + \mathrm{E}(a|z)^\top \mathrm{E}(b|z)$, we get
\begin{equation*}
\begin{split}
C_{TT} = \mathrm{E}( t^\top t | x,y,\theta ) & = I_r - \Cov\{t,(x,y)\}\,\Sigma^{-1}\,\Cov\{(x,y),t\} + \Cov\{t,(x,y)\}\,\Sigma^{-1} S \Sigma^{-1}\,\Cov\{(x,y),t\}, \\
C_{UU} = \mathrm{E}( u^\top u | x,y,\theta ) & = I_r - \Cov\{u,(x,y)\}\,\Sigma^{-1}\,\Cov\{(x,y),u\} + \Cov\{u,(x,y)\}\,\Sigma^{-1} S \Sigma^{-1}\,\Cov\{(x,y),u\},
\end{split}
\end{equation*}
where $S$ is the biased sample covariance matrix of $(x,y)$.
The conditional cross term equals 
\begin{equation*}
\begin{split}
C_{UT} = \mathrm{E}( u^\top t | x,y,\theta ) & = \Sigma_t B - \Cov\{u,(x,y)\}\,\Sigma^{-1}\,\Cov\{(x,y),t\} + \Cov\{u,(x,y)\}\,\Sigma^{-1} S \Sigma^{-1}\,\Cov\{(x,y),t\}
\end{split}
\end{equation*}
The covariances are given by
\begin{equation*}
\Cov\{(x,y),t\} = \begin{pmatrix} W\Sigma_t \\ C \Sigma_t B \end{pmatrix}, \quad \Cov\{(x,y),u\} = \begin{pmatrix} W \Sigma_t B \\ C (\Sigma_t B + \sigma_h^2 I_r) \end{pmatrix}.
\end{equation*}
Although the the conditional expectations involve random variables and parameters, in the maximization step the calculated quantities are considered fixed and known.

\paragraph{Maximization}
The maximization step involves maximizing the complete-data likelihood \eqref{eq:decomposition}, we have seen that it can be decomposed in distinct factors. This allows optimization of the expected complete data likelihood to be split into four sub-maximizations, given by the individual factors and their respective parameters in the following annotated product:
\begin{equation*}
\undertext{f(x|t)}{W,\sigma_e}\,\undertext{f(y|u)}{C,\sigma_f}\,\undertext{f(u|t)}{B,\sigma_h}\,\undertext{f(t)}{\Sigma_t}
\end{equation*}
Moreover, it will become apparent that each parameter within each component can be decoupled, yielding a separate maximization per component per parameter.
We focus on the part of $f(x|t)$ that depends on $W$, which is given~by
\begin{equation*}
\begin{split}
\mathrm{E}\left\{ \ln f(X|T) | X,Y \right\} & = -\mathrm{E}( ||X - TW^\top||^2 | X,Y ) + \mathrm{const.} \\
& = \tr( - X^\top X + 2 X^\top \mu_t W^\top - WC_{TT}W^\top ) + \mathrm{const.}
\end{split}
\end{equation*}
To take into account the constraints on $W$, namely $W^\top W = I_r$, we introduce a matrix of Lagrange multipliers $\Lambda$. We get as objective function
\begin{equation*}
\tr( - X^\top X + 2 X^\top \mu_t W^\top - WC_{TT}W^\top ) - \tr\{( W^\top W - I_r ) \Lambda\}.
\end{equation*}
Differentiating with respect to $W$ yields
$2 X^\top \mu_t - 2 W C_{TT} - 2 W \Lambda = 2 W \left(C_{TT} + \Lambda \right) - 2 X^\top \mu_t$.
One may choose $\Lambda$ so that $C_{TT} + \Lambda$ is invertible. In a maximum $W$ then satisfies
$W = X^\top \mu_t \left(C_{TT} + \Lambda \right)^{-1}$.
We want to find a $\Lambda$ such that the constraint holds, i.e., 
\begin{equation*}
I_r = W^\top W = \{(C_{TT} + \Lambda )^{-1}\}^\top \mu_t^\top X X^\top \mu_t \left(C_{TT} + \Lambda \right)^{-1}, \quad
\mu_t^\top X X^\top \mu_t = \left(C_{TT} + \Lambda \right)^\top \left(C_{TT} + \Lambda \right)
\end{equation*}
The last identity can be recognized as a Cholesky or Eigenvalue decomposition.
\begin{equation*}
\mu_t^\top X X^\top \mu_t = \left(C_{TT} + \Lambda \right)^\top \left(C_{TT} + \Lambda \right) = L_t L_t^\top
\end{equation*}
with $L_t$ the lower triangular matrix of a Cholesky decomposition of $\mu_t^\top X X^\top \mu_t$. Note that $L_t$ exists, since $\mu_t^\top X X^\top \mu_t$ is always positive semi-definite. 
Choosing $\Lambda = L_t^\top - C_{TT}$, we get as update $W = X^\top \mu_t (L_t^\top)^{-1}$.
Following the same reasoning, we obtain for the $f(Y|U)$ part $C = Y^\top \mu_u (L_u^\top)^{-1}$, where $L_u$ is the lower triangular matrix from the Cholesky decomposition of $\mu_u^\top Y Y^\top \mu_u$. 

The parameter $B$ involves maximizing $\ln f(U|T)$, which is given by
\begin{equation*}
- || U - TB ||^2  = - \tr( U^\top U - 2 U^\top TB + B T^\top T B ) + \mathrm{const.}
\end{equation*}
Taking the conditional expectation with respect to $(x,y)$ yields
$- \tr \, \mathrm{E}( U^\top U - 2U^\top TB + B T^\top T B | X,Y )$.
Differentiating with respect to $B$ and equating to the zero matrix yields
\begin{equation*}
B \mathrm{E}( T^\top T | X,Y ) = \mathrm{E}( U^\top T | X,Y ) \quad
B = \mathrm{E}( U^\top T | X,Y ) \{ \mathrm{E}( T^\top T | X,Y ) \}^{-1}
\end{equation*}
To incorporate the constraint that $B$ should be diagonal, we set the diagonal elements to zero, yielding
\begin{equation*}
B = \mathrm{E}( U^\top T | X,Y ) \{ \mathrm{E}( T^\top T | X,Y ) \}^{-1} \circ I_r,
\end{equation*}
with $\circ$ the element-wise (Hadamard) product operator.

For the covariance matrix of $\Sigma_t$, we consider $\ln f(T)$ which is given by
\begin{equation*}
2\ln f(T) = \mathrm{const.} - N\ln |\Sigma_T| - \tr( T^\top T \Sigma_T^{-1} ) 
= \mathrm{const.} + N \ln |\Sigma_T^{-1}| - \tr( T^\top T \Sigma_T^{-1} ).
\end{equation*}
After taking the conditional expectation of the last expression, it can be differentiated with respect to $\Sigma_t^{-1}$, which yields
\begin{equation*}
2\diffp{\Sigma_T^{-1}} \ln f(T) = N \Sigma_T - \mathrm{E}( T^\top T | x,y ) = 0, \quad
\Sigma_T = N^{-1} \mathrm{E}( T^\top T | x,y ) \circ I_r
\end{equation*}
The last Hadamart product ensures $\Sigma_t$ is diagonal.

To maximize over $\sigma_e^2$, we consider $\ln f(X|T)$ and note that $E = X - T W^\top$. Then $\ln f(X|T)$ is given by
\begin{equation*}
2\ln f(X|T) = \mathrm{const.} - Np \ln |\sigma_e^2| - \sigma_e^{-2} \tr( E^\top E ) = \mathrm{const.} + Np \ln \sigma_e^{-2} - \sigma_e^{-2} \tr( E^\top E ) 
\end{equation*}
After taking the conditional expectation of the last expression, we differentiate it with respect to $\sigma_e^{-2}$, yielding
\begin{equation*}
2\diffp{\sigma_e^{-1}} \ln f(X|T) = Np \sigma_e^2 -  \mathrm{E}( E^\top E | X,Y ) = 0, \quad
\sigma^2_e = (Np)^{-1} \mathrm{E}( E^\top E | X,Y ) 
\end{equation*}
The same derivation can be applied to $\ln f(y|u)$ and $\ln f(u|t)$ to find
\begin{equation*}
\sigma^2_f = (Nq)^{-1} \mathrm{E}( F^\top F | X,Y ), \quad
\sigma^2_h = (Nr)^{-1} \mathrm{E}( H^\top H | X,Y )
\end{equation*}

\section{Asymptotic standard errors for PPLS loadings}
\label{ap:SE}
\changed{
Using notation as in \cite{Louis1982} we define 
\begin{equation*}
\lambda(W_k) = -\frac{1}{2\sigma_e^2} \tr( X^\top X - 2 X^\top t_k w_k^\top + w_k t_k^\top t_k w_k^\top )
\end{equation*}
to be the part of the log likelihood depending on $w_k$.
We calculate the following first and second derivatives.
\begin{equation*}
S(w_k) = \nabla \lambda = \sigma_e^{-2}  ( X^\top t_k - w_k t_k^\top t_k ), \quad
B(w_k) = -\nabla^2 \lambda = \sigma_e^{-2}  (t_k^\top t_k ) I_p.
\end{equation*}
We obtain
\begin{equation*}
\begin{split}
\sigma_e^4 S(w_k)S(w_k)^\top & = X^\top t_k t_k^\top X - 2 X^\top t_k t_k^\top t_k w_k^\top + w_k t_k^\top t_k t_k^\top t_k w_k^\top, \\
\sigma_e^4 \mathrm{E} \{S(w_k)S(w_k)^\top|X,Y \} & = X^\top \mathrm{E} (t_k t_k^\top|X,Y ) X - 2 X^\top \mathrm{E} (t_k t_k^\top t_k|X,Y ) w_k^\top + w_k \mathrm{E} (t_k^\top t_k t_k^\top t_k|X,Y ) w_k^\top \\
& = \sigma_k^2 X^\top X - 2 X^\top ( \mu_k ||\mu_k ||^2_2 + 3\mu_k \sigma_k^2  ) w_k^\top + w_k ( ||\mu_k ||^4_2 + 6||\mu_k ||^2_2 \sigma_k^2  + 3 \sigma_k^4  ) w_k^\top.
\end{split}
\end{equation*}
Here $\mu_k = \mathrm{E}(t_k|X,Y)$ and $\sigma_k = \mathrm{E}(t_k^\top t_k|X,Y)$. For explicit expressions of these expectations, see Appendix \ref{ap:EM}.
For the second derivative we get $\mathrm{E}\{ B(w_k)|X,Y \} = \sigma_k^2 I_p / \sigma_e^2$.
The observed Fisher information is now
\begin{equation*}
I_{\rm obs} = \mathrm{E}\{B(w_k)|X,Y\} - \mathrm{E}\{S(w_k)S(w_k)^\top|X,Y\},
\end{equation*}
and the asymptotic covariance matrix of $w_k$ is $-I_{\rm obs}^{-1}$. The square root of the diagonal elements are the standard errors of the corresponding loading elements.
}
\end{appendices}

\section*{References}

\end{document}